\newcommand{\highlight}[1]{#1}
\newcommand{\ist}{\hspace*{.3mm}}
\newcommand{\rmv}{\hspace*{-.3mm}}
\renewcommand{\aa}{\lambda}
\newcommand{\bb}{\nu}
\newcommand{\slab}{0.6}	
\newcommand{\mlab}{0.7}	
\newtheorem{theorem}{Theorem}
\begin{document}

\title{Cooperative Synchronization in Wireless Networks}

\author{\IEEEauthorblockN{Bernhard Etzlinger,~\IEEEmembership{Student Member,~IEEE,}
Henk Wymeersch,~\IEEEmembership{Member,~IEEE,} \\
and Andreas Springer,~\IEEEmembership{Member,~IEEE}} %
\thanks{B. Etzlinger and A. Springer are with the Institute of Communications
Engineering and RF-Systems, Johannes Kepler University, Linz, Austria,
e-mail: \{b.etzlinger, a.springer\}@nthfs.jku.at. H. Wymeersch is
with the Department of Signals and Systems, Chalmers University of
Technology, Gothenburg, Sweden. email: henkw@chalmers.se. 

This research was supported, in part, by the European Research Council,
under Grant No. 258418 (COOPNET) and by the COMET K2 Center ``Austrian
Center of Competence in Mechatronics (ACCM)''. The COMET Program
is funded by the Austrian Federal government, the Federal State of
Upper Austria and the Scientific Partners of ACCM.%
} }
\maketitle
\begin{abstract}
\highlight{Synchronization is a key functionality in wireless network, enabling a wide variety of services. We consider a Bayesian inference framework whereby network nodes can achieve phase and skew synchronization in a fully distributed way. In particular, under the assumption of Gaussian  measurement noise, we derive two message passing methods (belief propagation and mean field), analyze their convergence behavior, and perform a qualitative and quantitative comparison with a number of competing algorithms. We also show that both methods can be applied in networks with and without master nodes. Our performance results are complemented
by, and compared with, the relevant Bayesian Cramér--Rao bounds.}

\end{abstract}
\begin{IEEEkeywords}
Network synchronization, belief propagation, mean field, distributed estimation,
Bayesian Cramér--Rao bound. 
\end{IEEEkeywords}

\section{Introduction}

\label{sec:introd} 

\PARstart{W}{ireless} \highlight{ networks (WNs)
must deliver a wide variety of services,
assuming a decentralized, but cooperative operation of the WN.
Many of these 
services 
have stringent requirements on time alignment among the nodes in the WN, while each node has its individual local clock. Usually these clocks are counters driven by a local oscillator, and differences appear in counter offsets and in oscillator frequencies.
Time alignment is necessary, e.g., for cooperative transmission and distributed beamforming \cite{jagannathan}, time-division-multiple-access
communication protocols \cite{demirkol}, 
duty-cycling \cite{ganeriwal05},
and 
localization and tracking methods
\cite{hlinka,elson2}, or location based control schemes \cite{antonelli13}.
In these tasks, the representation of specific time instants requires low clock offsets, while for accurate representation of time intervals, strict frequency alignment is necessary.
To guarantee correct operation,
the clocks need to be aligned up to a certain application-specific accuracy.
}

Synchronization is a widely studied topic. 
Existing network synchronization schemes differ mainly in how local time information
is encoded, exchanged, and processed \cite{simeone}. In this work, we will limit
ourselves to so-called packet-coupled synchronization, whereby local
time is encoded in time stamps and exchanged via packet transmissions
\cite{wu}. 
\highlight{
Commonly used algorithms, which consider both offset and frequency synchronization, are the {Reference Broadcast Synchronization} (RBS) \cite{elson} and the {Flooding Time Synchronization Protocol} (FTSP) \cite{maroti}. 
Both methods require a
specified network structure and do not perform  synchronization in a distributed manner. This increases communication and computation overhead to maintain the structure, makes the network more vulnerable to node failures, and reduces the scalability.
More recent synchronization algorithms work fully distributed, and are well suited 
to 
cooperative networks. For offset and frequency estimation, there are methods based on consensus \cite{zennaro, schenato,maggs,zhao}, and gradient descent \cite{solis}. They typically suffer from slow convergence speed, and thus require the exchange of a high number of data packets in the network to achieve a desired accuracy.
}
\highlight{
Recently, distributed Bayesian estimators were proposed, which provide a maximum a posteriori (MAP) estimate using belief propagation (BP) on factor graphs (FG). The successful application of offset synchronization in \cite{leng,zennaro13} showed superior estimation accuracy and higher convergence rate than competing distributed algorithms, in the case when a master node (MN) with reference time is available. The extension to joint offset and frequency synchronization is not straightforward, as nonlinear dependencies are introduced in the measurement model and thus in the likelihood of the measurements. In parallel to this work, \cite{du13} proposed such 
an extension by modifying the measurement equations. This modification restricts measurement model and results in an auxiliary function rather than a likelihood function. Thus, no MAP solution is obtained.
A general drawback of BP in \cite{leng,du13} is the high computational complexity, scaling quadratically in the number of neighbors.
}

In this paper, we build on the 
work from \cite{leng},
considering
both relative clock \highlight{phases (offsets)} and clock \highlight{frequencies (skews)}. Our contributions
are as follows: 
\highlight{
\begin{itemize}
  \item Based on a measurement model from experimental data, we derive an approximate, yet accurate 
	statistical model that allows a Gaussian reformulation of the MAP estimation of the clock parameters.
  \item We propose a BP and a mean field (MF) message passing algorithm based on the statistical model.
	When MNs are available, MF provides highly accurate synchronization with low computational complexity. To the best of our knowledge, this is the first application of MF to the network synchronization problem.
  \item 
	We provide convergence conditions for BP and MF synchronization with and without MNs.
  \item We derive a Bayesian Cramér--Rao bound (BCRB), which serves as a fundamental
	performance bound for the case when prior information on the local
	clock parameters is available.\footnote{\highlight{In recent work, 
	a non-Bayesian CRB for joint phase and skew estimation was derived in \cite{du13, chepuri13, zennaro13}.}}
\end{itemize}
}

The remainder of this paper is organized as follows: In Section
\ref{sec:sysmod}, we introduce the clock model, the network model,
and the measurement protocol. In Section \ref{sec:stateoftheart},
we provide an overview of the state-of-the-art on distributed synchronization,
under both phase and skew uncertainties. Exact and simplified statistical
models of measurement likelihoods and clock priors are derived
in Section \ref{sec:statmod}, followed by a description of the BCRB
in Section \ref{sec:BCRB}. The simplified models from Section \ref{sec:statmod}
are used to derive two Bayesian algorithms based on message passing
in Section \ref{sec:MP}. In Section \ref{sec:NumAn}, we numerically
study the properties of the proposed algorithms and compare them to
state-of-the-art algorithms from Section \ref{sec:stateoftheart}.
Finally, we draw our conclusions in Section \ref{sec:Conclusion}.


\section{System Model}

\label{sec:sysmod}

\subsection{Network Model }

\begin{figure}
\centering{}\includegraphics[width=0.5\columnwidth]{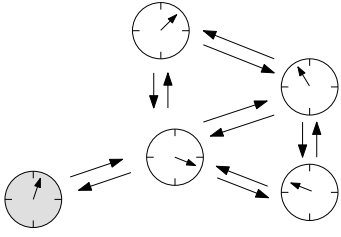}
\caption{\label{fig:connectivity}Connectivity graph of a wireless network
with $M=1$ MN (shaded) and $A=4$ agent nodes.}
\end{figure}

We consider a static network 
comprising a set $\mathcal{M} \triangleq \{1,\ldots,M\}$ of MNs and a set $\mathcal{A} \triangleq \{M+1,\ldots,N\}$
of agent nodes (ANs) (see Fig.~\ref{fig:connectivity}). The $M=|\mathcal{M}|$
fully synchronous MNs impose a common time reference to the
network. The $A=|\mathcal{A}|$ ANs have imperfect clocks
that may not run synchronously with the reference time. 

The topology is defined by the communication set $\mathcal{C} \rmv\subseteq\rmv \mathcal{I} \!\times\rmv \mathcal{I}$.
If two nodes $i,j \in \mathcal{I}$ can communicate, then  $(i,j) \!\in\! \mathcal{C}$ and $(j,i) \!\in\! \mathcal{C}$.
Connections among MNs are not considered, i.e., $(i,j) \notin \mathcal{C}$ if $i,j \!\in\! \mathcal{M}$.
For each $i \!\in\! \mathcal{A}$ we define a neighborhood set $\mathcal{T}_i \subseteq \mathcal{I} \rmv\setminus\! \{i\}$ that includes all 
$j \!\in\! \mathcal{I}$ that communicate with $i$, i.e., $j \!\in\! \mathcal{T}_i$ if and only if $(i,j) \!\in\! \mathcal{C}$.

The network is assumed to be connected, so that there is a path
between every pair of nodes.

\subsection{Clock Model}

Each network node $i$ possesses a clock displaying local time $c_{i}(t)$,
related to the reference time $t$ by 
\begin{equation}
c_{i}(t)=\alpha_{i}t+\beta_{i},\label{eq:local_clock}
\end{equation}
where $\beta_{i}$ is the clock phase of node $i$ and $\alpha_{i}$
is the clock skew of node $i$ \cite{wu}. When $i\rmv\in\rmv\mathcal{M}$,
$\alpha_{i}\!=\!1$ and $\beta_{i}\!=\!0$. When $i\rmv\in\rmv\mathcal{A}$, \linebreak both
$\alpha_{i}$ and $\beta_{i}$ are considered as random variables.
The clock phase $\beta_{i}$ depends on the initial network state,
and can be modeled with an uninformative prior (e.g., as uniformly
distributed over a large range, or, equivalently, having Gaussian
distribution with a large variance $\sigma_{\beta,i}^2$ \cite{leng}). The clock
skew $\alpha_{i}$ depends on the quality of the clocks, typically
expressed in parts per million (ppm), and is modeled as a Gaussian
random variable \cite{wu2} with mean 1 and variance $\sigma_{\alpha,i}^{2}$. 
Nodes with more sophisticated clocks will have smaller $\sigma_{\alpha,i}^{2}$.
\highlight{Note that in reality the clock skews $\alpha_i$ are not static over time, as they change with ambient environment variations (e.g., temperature). Such variations are typically much slower than the update rate of synchronization protocols, and can thus safely be ignored.}

The following notation will be convenient: $\bm{\theta}_{i}=[\alpha_{i},\beta_{i}]^{\mathrm{T}}$,
$\bm{\vartheta}_{i} = [\lambda_i, \bb_i]^{\mathrm{T}} = [1/\alpha_{i},\beta_{i}/\alpha_{i}]^{\mathrm{T}}$
.

\subsection{Measurement Model}

\begin{figure}
\centering{}\includegraphics[width=0.9\columnwidth]{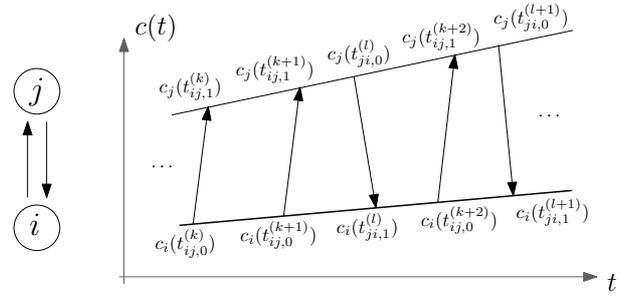}
\caption{\label{fig:2_way_msg}
Local clock counters $c_i(t)$ and $c_j(t)$ of node $i$ and $j$ w.r.t 
a reference time $t$, and recorded time stamps of the corresponding 
asymmetric packet exchange.}
\end{figure}

Following the asymmetric modeling in \cite{chepuri13}, which is an extension to \cite{wu, leng, zennaro13, du13, noh}, node pairs $(i,j)\in\mathcal{C}$ exchange
packets with time stamps to measure their local clock parameters $\bm{\theta}_i, \bm{\theta}_j$.
Node $i$ transmits $K_{ij}\ge1$ packets to node $j$ and 
node $j$ transmits
$K_{ji}\ge1$ packets
to node $i$.
The $k$th ``$i \rmv\to\! j$'' packet 
(where $k \in \{1,\ldots,K_{ij}\}$) leaves node $i$ at time $t_{ij,0}^{(k)}$ and arrives at node $j$ 
after a delay $\delta_{ij}^{(k)}$, at measured time 
\begin{equation}
t_{ij,1}^{(k)}=t_{ij,0}^{(k)}+\delta_{ij}^{(k)}.\label{eq:basicrelation}
\end{equation}
The delay $\delta_{ij}^{(k)}$ is expressed in true time and can be broken up 
as $\delta_{ij}^{(k)}=\Delta_{ij}+w_{ij}^{(k)}$
\cite{maroti}, where $\Delta_{ij}$ is a deterministic component
(related to coding and signal propagation) and $w_{ij}^{(k)}$ is
a stochastic component.
The nodes 
record 
$c_{j}\rmv(t_{ij,1}^{(k)})\rmv$ and $c_{i}(t_{ij,0}^{(k)})$,
which can be related to (\ref{eq:basicrelation}) through (\ref{eq:local_clock})
as 
\begin{equation}
c_{j}(t_{ij,1}^{(k)}) \,=\, \psi^{(k)}_{i\to j}(\bm{\theta}_{i},\bm{\theta}_{j},\Delta_{ij}) \ist+\ist w_{ij}^{(k)}\alpha_{j} \,,
\label{eq:basicmeasurement}
\vspace*{-1.5mm}
\end{equation}
with the deterministic part
\vspace*{-1mm}
\begin{equation}
\psi^{(k)}_{i\to j}(\bm{\theta}_{i},\bm{\theta}_{j},\Delta_{ij}) \,\triangleq\, \frac{c_{i}(t_{ij,0}^{(k)}) \rmv-\rmv \beta_{i}}{\alpha_{i}}\alpha_{j} 
  \ist+\ist \beta_{j} \ist+\ist \Delta_{ij} \alpha_{j} \,.
\label{eq:basicmeasurement_1}
\end{equation}
A similar relation holds for the packets
sent by node $j$ to node $i$, by exchanging $i$ and $j$ in (\ref{eq:basicmeasurement}).
The aggregated 
measurement of
nodes $i$ and $j$ 
is thus given by $\mathbf{c}_{ij} \triangleq [\mathbf{c}_{i \to j}^\mathrm{T} \,\ist\ist \mathbf{c}_{j \to i}^\mathrm{T}]^\mathrm{T}\!$,
with $\mathbf{c}_{i\to j} \triangleq \big[c_{j}(t_{ij,1}^{(1)}) \ist\cdots\, c_{j}(t_{ij,1}^{(K_{ij})}) \big]^{\mathrm{T}}\!$ and 
$\mathbf{c}_{j\to i} \triangleq \big[c_{i}(t_{ji,1}^{(1)})$\linebreak 
$\cdots\, c_{i}(t_{ji,1}^{(K_{ji})}) \big]^{\mathrm{T}}\!$.
For later use, we also define the 
\vspace{-.4mm}
(recorded, not measured) time stamp vectors
$\tilde{\mathbf{c}}_{i\to j} \rmv\triangleq\rmv \big[c_{i}(t_{ij,0}^{(1)}) \cdots c_{i}(t_{ij,0}^{(K_{ij})}) \big]^{\mathrm{T}}\!$ and 
 $\tilde{\mathbf{c}}_{j\to i} \triangleq \big[c_{j}(t_{ji,0}^{(1)}) \ist\cdots\, c_{j}(t_{ji,0}^{(K_{ji})}) \big]^{\mathrm{T}}\!$.

\begin{figure}
\begin{centering}
\psfrag{s01}[t][t][\mlab]{\color[rgb]{0,0,0}\setlength{\tabcolsep}{0pt}\begin{tabular}{c}Transmission delay [$\mu s$]\end{tabular}}%
\psfrag{s02}[b][b][\mlab]{\color[rgb]{0,0,0}\setlength{\tabcolsep}{0pt}\begin{tabular}{c}Relative occurrence\end{tabular}}%
\psfrag{s05}[l][l][\slab]{\color[rgb]{0,0,0}Gaussian fit}%
\psfrag{s06}[l][l][\slab]{\color[rgb]{0,0,0}Measurement}%
\psfrag{s07}[l][l][\slab]{\color[rgb]{0,0,0}Gaussian fit}%
\psfrag{s09}[][]{\color[rgb]{0,0,0}\setlength{\tabcolsep}{0pt}\begin{tabular}{c} \end{tabular}}%
\psfrag{s10}[][]{\color[rgb]{0,0,0}\setlength{\tabcolsep}{0pt}\begin{tabular}{c} \end{tabular}}%
%
\psfrag{x01}[t][t][\slab]{7.4}%
\psfrag{x02}[t][t][\slab]{7.6}%
\psfrag{x03}[t][t][\slab]{7.8}%
\psfrag{x04}[t][t][\slab]{8}%
\psfrag{x05}[t][t][\slab]{8.2}%
\psfrag{x06}[t][t][\slab]{8.4}%
\psfrag{x07}[t][t][\slab]{8.6}%
%
\psfrag{v01}[r][r][\slab]{0}%
\psfrag{v02}[r][r][\slab]{0.02}%
\psfrag{v03}[r][r][\slab]{0.04}%
\psfrag{v04}[r][r][\slab]{0.06}%
\psfrag{v05}[r][r][\slab]{0.08}%
\psfrag{v06}[r][r][\slab]{0.1}%
\includegraphics[width=1\columnwidth]{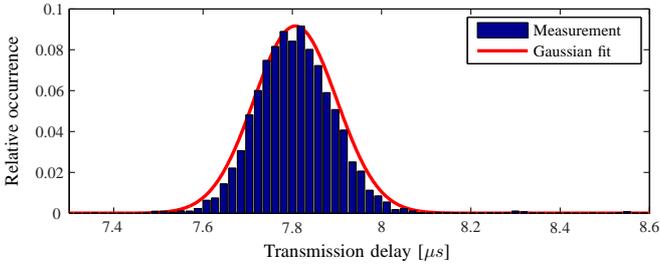} 
\caption{Measurement data and Gaussian fit of the delay $t_{ij,1}^{(k)} - t_{ij,0}^{(k)} = \delta_{ij}^{(k)}=\Delta_{ij}+w_{ij}^{(k)}$.}
\label{fig:delay_spread}
\par\end{centering}
\end{figure}

\vspace{0.3mm}
We model $\Delta_{ij}=T_{c}+T_{f,ij}$ as comprising
a hardware related computation time $T_{c}$, and a time of flight
$T_{f,ij}$. We further suppose that $\Delta_{ij}=\Delta_{ji}$.
Based on the results of a measurement campaign, shown in Fig.~\ref{fig:delay_spread},
with two Texas Instruments ez430-RF2500 evaluation boards\footnote{We placed
the boards 1 meter apart and transmitted 10,000 packets, collecting the corresponding transmit $t_{ij,0}^{(k)}$ and receive times $t_{ij,1}^{(k)}$.%
Via the general debug output (GDO with GDOx\_CFG = 6) of the CC2500
transceiver chip, time of transmission and time of reception was measured.
},
we modeled $w_{ij}^{(k)} \!\!\sim\! \mathcal{N}(0,\sigma^2_w)$, which is congruent with the models from \cite{noh,du13,leng,wu}. 
\highlight{The evaluated signal corresponds to a time stamping close to the physical layer \cite{loschmidt}, also often referred to as MAC time stamping \cite{maroti, zhao, maggs}. Using this concept, nondeterministic delays from higher layers, such as routing and queuing delays, are eliminated, and the parameters of the delay distribution are assumed to be static}.

\subsection{Network Synchronization}

Our goal is to infer the local clock parameters\footnote{
\highlight{
If synchronization would only correct the offset values, the existing frequency mismatches cause a drift of these offsets over time. This requires frequent resynchronization, leading to higher energy consumption. Also, frequency requirements of the application can only be achieved by using expensive hardware. Using \emph{joint} offset and frequency synchronization, frequency requirements can be met with less expensive hardware and resynchronization intervals can be increased.
}
}$\alpha_{i}$ and
$\beta_{i}$ (or an invertible transformation thereof), based on the
measurements and the prior clock information. In the following,
sections, we will describe standard approaches to solve this problem,
followed by our proposed Bayesian approach. 


\section{State-of-the-Art}

\label{sec:stateoftheart} 

In this section, we briefly present a selection of existing synchronization methods
for the presented clock and network model. We limit our overview to
algorithms that perform 
skew and phase synchronization 
based on time stamp
exchange in a fully distributed manner, where every node runs the
same algorithm. Within this class, we discuss approaches based on consensus \cite{schenato}, alternating direction of multiplier
method (ADMM) \cite{zennaro}, and loop constrained combination of
pairwise estimations \cite{solis}.

\subsection{Average TimeSync}

Consensus protocols are based on averaging information received
from neighbors, and thus have low computational complexity. Moreover,
MNs are not considered. A reference time can be introduced, if a single node does not update its local parameters. This modification leads to a decreased convergence speed.
In the Average TimeSync (ATS) algorithm from \cite{schenato},
every node has a virtual clock 
\[
\hat{c}_{i}(c_{i}(t))=\hat{\alpha}_{i}c_{i}(t)+\hat{\beta}_{i}=\hat{\alpha}_{i}\alpha_{i}t+\hat{\alpha}_{i}\beta_{i}+\hat{\beta}_{i},
\]
which is controlled by a virtual skew $\hat{\alpha}_{i}$ and a virtual
phase $\hat{\beta}_{i}$. By adjusting the virtual skew and phase,
ATS assures asymptotic agreement on the virtual clocks $\lim\limits _{t\rightarrow\infty}\hat{c}_{i}(c_{i}(t))=\tau_{v}(t)$,
$\forall i\in\mathcal{A}$, where $\tau_{v}(t)$ is a network-wide
common time. The algorithm assumes $\delta_{ij}=0$. 

\subsection{ADMM Consensus}

In ADMM consensus from \cite{zennaro}, relative skews $\alpha_{ij}=\alpha_{i}/\alpha_{j}$
and relative phase offsets $\beta_{ij}=\beta_{i}-\beta_{j}$ are assumed
to be available a priori (e.g., \highlight{from a phase locked loop (PLL),} or from an estimation algorithm such as
\cite{noh}). Then follows a network-wide correction of the local
clock parameters in discrete instances $k\Delta T$. Collecting the
clock skews in $\mathbf{T}^{(k)}=[\alpha_{1}^{(k)}\Delta T,\ldots,\alpha_{A}^{(k)}\Delta T]^{\mathrm{T}}$,
and the clock phases in $\bm{\beta}^{(k)}=[\beta_{1}^{(k)},\ldots,\beta_{A}^{(k)}]^{\mathrm{T}}$,
control signals $\mathbf{u}^{(k)}$ and $\mathbf{v}^{(k)}$ are applied
as 
\begin{align*}
\bm{\beta}^{(k+1)} & =\bm{\beta}^{(k)}+\mathbf{T}^{(k)}+\mathbf{u}^{(k)}\\
\mathbf{T}^{(k+1)} & =\mathbf{T}^{(k)}+\mathbf{v}^{(k)}.
\end{align*}
The computation of $v_{i}^{(k)}$ at a node $i$ is based on ADMM
and requires knowledge of $\alpha_{jk}$ for all nodes $j\in\mathcal{T}_i,k\in\text{N}(j)$,
i.e., from all two-hop neighbors. It can be shown that as $k\to+\infty$,
$\mathbf{T}^{(k)}\to \bar{T} \cdot \mathbf{1}_{A,1}$ for some common value $\bar{T}$, where $\mathbf{1}_{k,l}$ 
denotes a $k\times l$ matrix with all entries equal to one.
Finally, agreement on the clock phases is achieved through the control
signal $\mathbf{u}^{(k)}$ using average consensus. Although fully
distributed and master-free, ADDM consensus requires $I$ inner iterations for offset 
measurements and outer iterations for the consensus. It further relies 
on a step size parameter, whose optimal value depends on global network properties. 

\subsection{Loop Constrained Synchronization}

In \cite{solis}, it was observed that for every closed loop $\mathcal{L}$
in the network, it is such that $\sum_{i,j\in\mathcal{L}}\tilde{x}_{ij}=0,$
for $\tilde{x}_{ij}=\beta_{i}-\beta_{j}$ and $\tilde{x}_{ij}=\log(\alpha_{i}/\alpha_{j})$.
 Using these constraints, the absolute clock values are determined
via coordinate descent of the least squares problem \cite{solis} 
\begin{align*}
\hat{\mathbf{v}}=\arg\min_{\mathbf{v}}\|\mathbf{A} \mathbf{v}-\tilde{\mathbf{x}}\|^{2},
\end{align*}
where $\mathbf{A}$ is the incidence matrix representing a directed topology, $\mathbf{v}$
the vector of absolute clock parameters (phase or skew) and $\tilde{\mathbf{x}}$
is the collection of offset measurements. In order to find a global
optimum, a MN needs to be selected. During the iterations, local
estimates on absolute skew and phase are exchanged with all one-hop
neighbors.


\section{Statistical Models}

\label{sec:statmod} 

The above-mentioned algorithms are all non-Bayesian, and thus do not
fully exploit all statistical information present in the network.
When clock skews are known, fast, distributed Bayesian algorithms
were derived in \cite{leng}. When clock skews are unknown, the naive
extension of \cite{leng} would lead to impractical algorithms, due
to the complex integrals that need to be computed. In this section,
we propose a series of approximations to measurement likelihoods and
prior distributions, with the aim of a simple representation of the 
posterior distribution. Using this simplifications, the maximum a posteriori (MAP) estimate of the clock 
parameters (or a transformation thereof) can be found with reasonable
complexity.

\subsection{Likelihood Function\label{sub:Likelihood-Function}}

Because of \eqref{eq:basicmeasurement} and the statistical properties of $w_{ij}^{(k)}$,
the local likelihood function of nodes $i$ and $j$, 
with $(i,j) \!\in\! \mathcal{C}$, is
\begin{align}
 & p(\mathbf{c}_{ij}|\bm{\theta}_{i},\bm{\theta}_{j};\Delta_{ij}) \label{eq:trueLH} \\ 
 & \quad=\, G_{ij} \exp\!\left(\rmv-\ist\frac{\left\Vert \mathbf{c}_{i\to j} \rmv-\bm{\psi}_{i\to j}\right\Vert ^{2}}{2\alpha_{j}^{2}\sigma_w^{2}} 
   - \frac{\left\Vert \mathbf{c}_{j\to i} \rmv-\bm{\psi}_{j\to i}\right\Vert ^{2}}{2\alpha_{i}^{2}\sigma_w^{2}} \right) ,\nonumber \\[-5.5mm]
\nonumber
\end{align}
where $G_{ij} \triangleq {(2\pi\alpha_{j}^{2}\sigma_w^{2})}^{-K_{ij}/2} {(2\pi\alpha_{i}^{2}\sigma_w^{2})}^{-K_{ji}/2}\rmv$,
$\bm{\psi}_{i\to j} \triangleq$\linebreak 
$\big[\psi^{(1)}_{i\to j}(\bm{\theta}_{i},\bm{\theta}_{j},\Delta_{ij}) \,\cdots\, \psi^{(K_{ij})}_{i\to j}(\bm{\theta}_{i},\bm{\theta}_{j},\Delta_{ij}) \big]^{\mathrm{T}}\!$,
\vspace{.4mm}
and $\bm{\psi}_{j\to i} \triangleq$\linebreak 
$\big[\psi^{(1)}_{j\to i}(\bm{\theta}_{j},\bm{\theta}_{i},\Delta_{ij}) \,\cdots\, \psi^{(K_{ji})}_{j\to i}(\bm{\theta}_{j},\bm{\theta}_{i},\Delta_{ij}) \big]^{\mathrm{T}}\!$.
\highlight{
Since the covariance depends on $\alpha_{i}$ and $\alpha_{j}$ and marginalization over these parameters is
not analytically tractable, the direct application of the likelihood function as in \cite{leng} for message passing is not straightforward. Moreover, the dependence of the unknown delay $\Delta_{ij}$ does not vanish in the presented distribution. 
In the following, we propose an approximation of \eqref{eq:trueLH} to circumvent these problems.\footnote{\highlight{In \cite{du13}, an alternative solution was proposed where the likelihood function is replaced by a surrogate function. The nonlinear dependencies and the delays were eliminated by scaling the measurement equations with the skew parameter, which destroys the likelihood property in the technical sense. 
Hence the estimator is not a MAP estimator.}
}}
\highlight{Computing the Fischer information of \eqref{eq:trueLH} with respect to $\alpha_i$ (and similarly to $\alpha_j$), it can be seen that $G_{ij}$ has a smaller contribution\footnote{\highlight{The contribution to the Fisher information of the scaling factor $G_{ij}$ is $K_{ji}/{\alpha_i^2}$, and of the exponent $- \left(3 K_{ji} \sigma_w^2 + \|\mathbf{t}_{ij,0} \|^2 + \|\mathbf{t}_{ji,0} + \mathbf{1} \Delta_{ij} \|^2  \right)/({\alpha_i}^2 \sigma_w^2)$.}}
than the exponent,
as long as $\|\mathbf{t}_{ij,0} \|^2 + \|\mathbf{t}_{ji,0} + \mathbf{1} \Delta_{ij} \|^2 \gg K_{ji} \sigma_w^2$, where $\mathbf{t}_{nm,0}$ is the collection of ${t}_{nm,0}^{(k)}, \, k \in \{1,\ldots, K_{nm}\}$. Thus, for practical scenarios we can approximate $G_{ij} \approx \tilde{G}_{ij} \triangleq {(2\pi \sigma_w^{2})}^{-K_{ij}/2} {(2\pi\sigma_w^{2})}^{-K_{ji}/2}\rmv$.
}

\highlight{
The dependence of the unknown delay $\Delta_{ij}$ can be removed} by computing the maximum likelihood estimate
of $\Delta_{ij}$ and substituting the estimate back into the likelihood.
Taking the logarithm of (\ref{eq:trueLH}) and setting the derivative
with respect to $\Delta_{ij}$ to zero leads to the following estimate
\begin{equation}
\hat{\Delta}_{ij}(\bm{\theta}_{i}, \bm{\theta}_{j})=a_{i}\frac{1}{\alpha_{i}}+a_{j}\frac{1}{\alpha_{j}}+b_{ij}\frac{\beta_{i}}{\alpha_{i}}-b_{ij}\frac{\beta_{j}}{\alpha_{j}},\label{eq:delayEstimate}
\end{equation}
where $a_{i},a_{j},b_{ij}$ are functions of the observations, detailed
in Appendix \ref{ap:ML_dist_estimate}. Substituting (\ref{eq:delayEstimate})
in (\ref{eq:trueLH}) and considering the approximation of the normalization
constant leads to the following approximate likelihood function 
\begin{align}
\tilde{p} & (\mathbf{c}_{ij}|\bm{\vartheta}_i, \bm{\vartheta}_j)\propto\exp\left(-\frac{1}{2\sigma_{w}^{2}}\|\mathbf{A}_{ij}\bm{\vartheta}_{i}+\mathbf{B}_{ij}\bm{\vartheta}_{j}\|^{2}\right),\label{eq:apprx_LH_GaussForm_Hat}
\end{align}
with
\begin{align*}
  \mathbf{A}_{ij} & \triangleq \begin{bmatrix} 
      -\tilde{\mathbf{c}}_{i\to j}\!\! & \!\!\mathbf{1}_{K_{ij} }\\[.7mm]
      \mathbf{c}_{j\to i}\!\! & \!\!-\mathbf{1}_{K_{ji}} 
      \end{bmatrix}
      +[a_{i}\ +b_{ij}]\otimes\mathbf{1}_{K_{ij}+K_{ji},1},\\
  \mathbf{B}_{ij} &\triangleq \begin{bmatrix} 
      \mathbf{c}_{i\to j}\!\! & \!\!-\mathbf{1}_{K_{ij} }\\[.7mm]
      -\tilde{\mathbf{c}}_{j\to i}\!\! & \!\!\mathbf{1}_{K_{ji}} 
      \end{bmatrix} 
      +[a_{j}\ -b_{ij}]\otimes\mathbf{1}_{K_{ij}+K_{ji},1},
\end{align*}
 where $\otimes$ denotes the Kronecker product. Note that $\mathbf{A}_{ji} \neq \mathbf{B}_{ij}$, but $\mathbf{A}_{ji}^\mathrm{T}\mathbf{A}_{ji} = \mathbf{B}_{ij}^\mathrm{T}\mathbf{B}_{ij}$. The approximated
likelihood function in \eqref{eq:apprx_LH_GaussForm_Hat} no longer
contains the delay $\Delta_{ij}$ and can be interpreted as Gaussian
in the transformed parameters $\bm{\vartheta}_{i},\bm{\vartheta}_{j}$. As we will see
in Section \ref{sec:MP}, this latter observation has advantages in
the algorithm design for distributed parameter estimation since it
leads to simpler computation rules.

\subsection{Prior Distribution\label{sub:Prior-Information}}

Since our simplified likelihood function now has a Gaussian form in
the transformed clock parameters $\bm{\vartheta}_{i},\bm{\vartheta}_{j}$, 
we need to
select a suitable Gaussian prior so as to end up with a Gaussian posterior
distribution.

\highlight{As MNs induce a reference time in the WN, they have perfect knowledge of their clock parameters, modeled by
$p(\bm{\vartheta}_{i}) = \delta(\bm{\vartheta}_{i} - \bm{\vartheta}_i^\ast)$, $i \!\in\! \mathcal{M}$,
where  $\bm{\vartheta}_i^\ast$ denotes the true transformed clock parameter of MN $i$ and $\delta(\cdot)$ denotes the Dirac delta function.
}
\highlight{
For ANs, clock phases are in the most general case unbounded, and $\bb_i \triangleq \beta_{i}/\alpha_{i}$ can be modeled as 
having an as prior with infinite variance. For bounded intervals, a finite variance can be used.
The
clock skews depend on various random quantities such as environmental effects, production quality, and supply voltage. 
Moreover, the skews
of correctly working clocks are bounded in intervals close around 1, and we can use the approximation $\aa_i \triangleq 1/\alpha_{i} = 1/(1+\varepsilon_{i}) \approx 1-\varepsilon_{i}$ \cite{cristian89}, where $\varepsilon_i \triangleq \alpha_i - 1$. 
Finally, for the AN we use the Gaussian prior \cite{wu2}
$p(\bm{\vartheta}_{i})=\mathcal{N}(\bm{\mu}_{\text{p},i},\bm{\Sigma}_{\text{p},i})$, $i \!\in\! \mathcal{A}$,
with
$\bm{\mu}_{\text{p},i} = [ 1 \;\, 0]^{\mathrm{T}}\!$ (note that $\bm{\vartheta}_{i} = [ 1 \;\, 0]^{\mathrm{T}}$ would correspond to $\alpha_{i} \!=\! 1$ and 
$\beta_{i} \!=\! 0$) and $\bm{\Sigma}_{\text{p},i} = {\rm diag}\big\{ \sigma_{\rmv\aa_i}^{2}, \sigma_{\rmv\bb_i}^{2} \big\}$.
We set $\sigma_{\rmv\aa_i}^{2} \!=\rmv \sigma_{\rmv\alpha_i}^{2}$, where $\sigma_{\rmv\alpha_i}^{2}$ is 
related to the oscillator specification, and we choose $\sigma_{\rmv\bb_i}^{2}$ large, since 
limited prior information on the clock phase $\beta_{i}$ is available.
}

\subsection{Posterior Distribution and Estimator}

Putting together the approximate likelihood function from Section
\ref{sub:Likelihood-Function} with the prior in the transformed parameters
from Section \ref{sub:Prior-Information}, we find the following
posterior distribution in the transformed parameters 
\begin{align}
\tilde{p}(\bm{\vartheta}|\mathbf{c}) & \propto\prod_{i\in\mathcal{A}\cup\mathcal{M}}p(\bm{\vartheta}_{i})\prod_{(i,j)\in\mathcal{C}}\tilde{p}(\mathbf{c}_{ij}|\bm{\vartheta}_{i}, \bm{\vartheta}_{j}),\label{eq:posterior_fact_modified}
\end{align}
which is a Gaussian distribution in $\bm{\vartheta}$. The inverse covariance matrix
$\tilde{\bm{\Sigma}}^{-1}$ of this Gaussian turns out to be highly
structured, with block entries (for $i,j\in\mathcal{A}$) 
\begin{align}
\left[\tilde{\bm{\Sigma}}^{-1}\right]_{i,i} & =\bm{\Sigma}_{\text{p},i}^{-1}+\sum_{j\in\mathcal{T}_i}\frac{1}{\sigma^2_{w}}\mathbf{A}_{ij}^{\mathrm{T}}\mathbf{A}_{ij}^{} 
\nonumber \\
\left[\tilde{\bm{\Sigma}}^{-1}\right]_{i,j} & =\left\{ \begin{matrix}\frac{1}{\sigma^2_{w}}\mathbf{A}_{ij}^{\mathrm{T}}\mathbf{B}_{ij}^{}
& \text{for }j\in\mathcal{T}_i\\
\mathbf{0} & \text{else.}
\end{matrix}\right.\label{eq:MAP_covar}
\end{align}
If we are able to marginalize $\tilde{p}(\bm{\vartheta}|\mathbf{c})$
to recover $\tilde{p}(\bm{\vartheta}_{i}|\mathbf{c})$, we can compute
the MAP estimate of $\bm{\vartheta}_{i}$, $i\in\mathcal{A}$
as
\begin{align}
\hat{\bm{\vartheta}}_{i} & =\arg\max_{\bm{\vartheta}_{i}}\tilde{p}(\bm{\vartheta}_{i}|\mathbf{c})\label{eq:MAP_est_modified}\\
 & =\arg\max_{\bm{\vartheta}_{i}}\int\,\tilde{p}(\bm{\vartheta}|\mathbf{c})\,\mathrm{d}\bm{\vartheta}_{\ist \bar{i}},\nonumber 
\end{align}
where $\bm{\vartheta}_{\ist \bar{i}}$ indicates that the integration is
over all $\bm{\vartheta}_{j}$ except $\bm{\vartheta}_{i}$. From $\hat{\bm{\vartheta}}_{i}$,
we can further determine the clock parameters by $\hat{\alpha}_{i}=1/[\hat{\bm{\vartheta}}_{i}]_{1}$
and $\hat{\beta}_{i}=[\hat{\bm{\vartheta}}_{i}]_{2}/[\hat{\bm{\vartheta}}_{i}]_{1}$,
where $[\cdot]_{m}$ extracts the $m$-th element of a vector. Solving
this problem in a distributed manner will be the topic of Section
\ref{sec:MP}. 


\section{Bayesian Cramér--Rao Bound}

\label{sec:BCRB} 
Based on the statistical models from Section \ref{sec:statmod}, it
is possible to derive fundamental performance bounds on the quality
of estimators. One such bound is the BCRB,
which gives a lower bound on the achievable estimation accuracy on
$\bm{\theta}_{i}$ \cite{trees}. The BCRB is derived based on the
Fisher information matrix, assuming known $\Delta_{ij}$ for every
link: 
\begin{align}
\mathbf{J} & =-\mathbb{E}_{{\boldsymbol{\theta},\mathbf{c}}}\begin{bmatrix}\left\{ \nabla_{{\boldsymbol{\theta}}}\left\{ \nabla_{{\boldsymbol{\theta}}}[\log p(\bm{\theta}|\mathbf{c};\boldsymbol{\Delta})]\right\} ^{\mathrm{T}}\right\} \end{bmatrix}\nonumber \\
 & =-\mathbb{E}_{{\boldsymbol{\theta},\mathbf{c}}}\begin{bmatrix}\left\{ \nabla_{{\boldsymbol{\theta}}}\left\{ \nabla_{{\boldsymbol{\theta}}}[\log p(\mathbf{c}|\bm{\theta};\boldsymbol{\Delta})]\right\} ^{\mathrm{T}}\right\} \end{bmatrix}\nonumber \\
 & \quad-\mathbb{E}_{{\boldsymbol{\theta}}}\begin{bmatrix}\left\{ \nabla_{{\boldsymbol{\theta}}}\left\{ \nabla_{{\boldsymbol{\theta}}}[\log p(\bm{\theta})]\right\} ^{\mathrm{T}}\right\} \end{bmatrix}\nonumber \\
 & =\mathbb{E}_{{\boldsymbol{\theta}}}[\mathbf{J}_{l}]+\mathbb{E}_{{\boldsymbol{\theta}}}[\mathbf{J}_{p}],\label{eq:fischer}
\end{align}
in which the matrix $\mathbf{J}_{p}$ represents the contribution
of the prior information, and is a diagonal matrix with block
entries equal to the covariances matrices of the priors.
The matrix $\mathbf{J}_{l}$ represents the contribution of the likelihood
function $p(\mathbf{c}|\bm{\theta};\Delta)=\mathcal{N}_{\mathbf{c}}(\bm{\mu}_{l},\bm{\Sigma}_{l})$, 
which is the product of the pairwise functions in (\ref{eq:trueLH}). It is computed as 
\begin{equation}
[\mathbf{J}_{l}]_{i,j}=\frac{\partial\boldsymbol{\mu}_{l}^{\mathrm{T}}}{\partial\boldsymbol{\theta}_{i}}\boldsymbol{\Sigma}_{l}^{-1}\frac{\partial\boldsymbol{\mu}_{l}}{\partial\boldsymbol{\theta}_{j}}+\frac{1}{2}\text{trace}\left[\boldsymbol{\Sigma}_{l}^{-1}\frac{\partial\boldsymbol{\Sigma}_{l}}{\partial\boldsymbol{\theta}_{i}}\boldsymbol{\Sigma}_{l}^{-1}\frac{\partial\boldsymbol{\Sigma}_{l}}{\partial\boldsymbol{\theta}_{j}}\right] \label{eq:FIM_LH}
\end{equation}
 for $i,j\in\mathcal{A}$. 
$\mathbf{J}_{l}$ has $2\times2$ non-zero blocks in the main diagonal
and in $i$-th row and $j$-th column when $j\in\mathcal{T}_i$. Thus,
it will have the same structure as the inverse covariance matrix in \eqref{eq:MAP_covar}.
Additional details are provided in Appendix \ref{ap:FIM}. Finally,
the BCRB on a certain parameter, say the $k$-th parameter in the
$2A$-dimensional vector $\bm{\theta}$, is given by 
\[
\text{BRCB}_{k}=\left[\mathbf{J}^{-1}\right]_{k,k}.
\]


\section{Distributed Parameter Estimation}

\label{sec:MP} 

To solve the marginalization in \eqref{eq:MAP_est_modified} in a
distributed way, we use approximate inference via message passing
on factor graphs. In the following, we describe the factor graph 
for the synchronization problem and motivate the use of message 
passing for optimum retrieval of posterior marginals. Finally, we 
derive two synchronization algorithms.

\subsection{Factor Graph}

\begin{figure}
  \psfrag{pt01}[t][t][0.65]{\color[rgb]{0,0,0}\setlength{\tabcolsep}{0pt}\begin{tabular}{c}\raisebox{2.2mm}{$\, p(\bm{\vartheta}_1)$}\end{tabular}}
  \psfrag{pt02}[t][t][0.65]{\color[rgb]{0,0,0}\setlength{\tabcolsep}{0pt}\begin{tabular}{c}\raisebox{2.2mm}{$\, p(\bm{\vartheta}_2)$}\end{tabular}}
  \psfrag{pt03}[t][t][0.65]{\color[rgb]{0,0,0}\setlength{\tabcolsep}{0pt}\begin{tabular}{c}\raisebox{2.2mm}{$\, p(\bm{\vartheta}_3)$}\end{tabular}}
  \psfrag{pt4}[t][t][0.65]{\color[rgb]{0,0,0}\setlength{\tabcolsep}{0pt}\begin{tabular}{c}\raisebox{2.2mm}{$\, p(\bm{\vartheta}_4)$}\end{tabular}}
  \psfrag{pt5}[t][t][0.65]{\color[rgb]{0,0,0}\setlength{\tabcolsep}{0pt}\begin{tabular}{c}\raisebox{2.2mm}{$\, p(\bm{\vartheta}_5)$}\end{tabular}}
  \psfrag{pt12}[b][b][0.65]{\color[rgb]{0,0,0}\setlength{\tabcolsep}{0pt}\begin{tabular}{c}\vspace{-0.9mm}{$\, \, \tilde{p}(\mathbf{c}_{12} | \bm{\vartheta}_1, \bm{\vartheta}_2)$}\end{tabular}}
  \psfrag{pt23}[b][b][0.65]{\color[rgb]{0,0,0}\setlength{\tabcolsep}{0pt}\begin{tabular}{c}\vspace{-0.9mm}{$\, \, \tilde{p}(\mathbf{c}_{23} | \bm{\vartheta}_2, \bm{\vartheta}_3)$}\end{tabular}}
  \psfrag{pt24}[b][b][0.65]{\color[rgb]{0,0,0}\setlength{\tabcolsep}{0pt}\begin{tabular}{c}\vspace{-0.9mm}{$\, \, \tilde{p}(\mathbf{c}_{24} | \bm{\vartheta}_2, \bm{\vartheta}_4)$}\end{tabular}}
  \psfrag{pt25}[b][b][0.65]{\color[rgb]{0,0,0}\setlength{\tabcolsep}{0pt}\begin{tabular}{c}\vspace{-1.1mm}{$\hspace{2.5mm} \tilde{p}(\mathbf{c}_{25} | \bm{\vartheta}_2, \bm{\vartheta}_5)$}\end{tabular}}
  \psfrag{pt34}[b][b][0.65]{\color[rgb]{0,0,0}\setlength{\tabcolsep}{0pt}\begin{tabular}{c}\vspace{-0.9mm}{$\, \tilde{p}(\mathbf{c}_{34} | \bm{\vartheta}_3, \bm{\vartheta}_4)$}\end{tabular}}
  \psfrag{pt45}[b][b][0.65]{\color[rgb]{0,0,0}\setlength{\tabcolsep}{0pt}\begin{tabular}{c}\vspace{-0.9mm}{$\, \, \tilde{p}(\mathbf{c}_{45} | \bm{\vartheta}_4, \bm{\vartheta}_5)$}\end{tabular}}
  \psfrag{t1}[b][b][0.65]{\color[rgb]{0,0,0}\setlength{\tabcolsep}{0pt}\begin{tabular}{c}\vspace{-1.4mm}{$\bm{\vartheta}_1 \ist$}\end{tabular}}
  \psfrag{t2}[b][b][0.65]{\color[rgb]{0,0,0}\setlength{\tabcolsep}{0pt}\begin{tabular}{c}\vspace{-1.4mm}{$\ist \bm{\vartheta}_2$}\end{tabular}}
  \psfrag{t3}[b][b][0.65]{\color[rgb]{0,0,0}\setlength{\tabcolsep}{0pt}\begin{tabular}{c}\vspace{-1.4mm}{$\bm{\vartheta}_3 \rmv$}\end{tabular}}
  \psfrag{t4}[b][b][0.65]{\color[rgb]{0,0,0}\setlength{\tabcolsep}{0pt}\begin{tabular}{c}\vspace{-1.4mm}{$\bm{\vartheta}_4 \rmv$}\end{tabular}}
  \psfrag{t5}[b][b][0.65]{\color[rgb]{0,0,0}\setlength{\tabcolsep}{0pt}\begin{tabular}{c}\vspace{-1.4mm}{$\bm{\vartheta}_5 \rmv$}\end{tabular}}
  \centering{}\includegraphics[width=0.8\columnwidth]{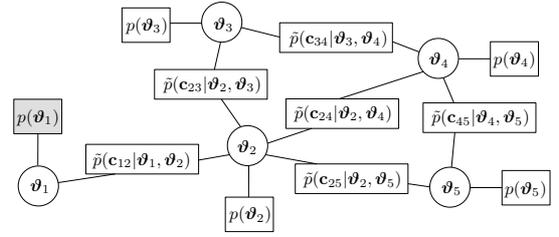}
  \caption{Factor graph of the posterior distribution for a 5 node network with $\mathcal{M}=\{1\}$ and $\mathcal{A}=\{2,3,4,5\}$.}
  \label{fig:FG_network}
\end{figure}

The factor graph associated to the factorization in \eqref{eq:posterior_fact_modified}
is found by drawing a variable vertex for every variable (drawn as
circles) and a factor/function vertex for every factor (drawn as rectangles).
Vertices are connected via edges according to their functional dependencies.
The factor graph%
\footnote{The representation differs slightly from the factor graph presented
in \cite{leng}, as in our case both nodes have access to the same
function vertex since they share the measurements. The presentation
in \cite{leng} accounts for 2 disjoint sets of measurements that
are not shared between the nodes \cite{wymeersch}.%
} that corresponds to the connectivity graph in Fig.~\ref{fig:connectivity}
is depicted in Fig.~\ref{fig:FG_network}. Note that every variable
vertex corresponds to the variables of a physical network node and
that every factor vertex corresponds to a measurement link in the
physical network. Thus, the structure of the connectivity graph is
kept in the factor graph: a tree connectivity remains as tree factor
graph, a star connectivity remains as star factor graph, and so on. 

Factor graphs are combined with message passing methods in order to
compute, e.g., marginal posteriors. Different message passing methods
lead to different performance/complexity trade-offs. A framework to
compare message passing method is found through variational free energy
minimization.

\subsection{Energy Minimization for Marginal Retrieval}

Our goal is to find practical methods to determine, exactly or approximately,
the marginals from \eqref{eq:MAP_est_modified}. From \cite{yedidia},
one strategy is to minimize the \textit{variational free energy} for
a positive function $b(\bm{\vartheta})$ approximating $\tilde{p}(\bm{\vartheta}|\mathbf{c})$:
\begin{equation}
b^{*}(\cdot)=\arg\min_{b(\cdot)} \int b(\bm{\vartheta})\,\log\frac{b(\bm{\vartheta})}{\tilde{p}(\bm{\vartheta}|\mathbf{c})}\,\mathrm{d}\bm{\vartheta}. \label{eq:FEmin}
\end{equation}

As algorithm designers, we can impose structure to the function
$b(\bm{\vartheta})$ to allow efficient solving of (\ref{eq:FEmin}).
We will consider two classes of functions: (i) the Bethe method,
in which $b(\bm{\vartheta})$ is constrained to be a product of factors
of the form $b_{i}(\bm{\vartheta}_{i})$ and $b_{ij}(\bm{\vartheta}_{i},\bm{\vartheta}_{j})$;
and (ii) the mean field method, which constrains $b(\bm{\vartheta})$
to be of the form $b(\bm{\vartheta})=\prod_{i}b_{i}(\bm{\vartheta}_{i})$.
Minimizing \eqref{eq:FEmin} subject to the constraints imposed
by the approximations, leads to the message passing rules \cite{yedidia}.
The message passing rules turn out to be the belief propagation (BP)
equations for class (i) and the mean field (MF) equations for class
(ii). 

In the following we use the shorthand ${p}_{ij}$ for $\tilde{p}(\mathbf{c}_{ij}|\bm{\vartheta}_{i}, \bm{\vartheta}_{j})$. 
Furthermore, since the approximated
joint posterior distribution in \eqref{eq:posterior_fact_modified} is
Gaussian in $\bm{\vartheta}$, we consider only messages that are Gaussian
in $\bm{\vartheta}$.

\subsection{Synchronization by Message Passing}

Above, we introduced two message passing schemes, BP and MF. By applying
both, we find two synchronization algorithms where network nodes cooperate by the exchange of messages. We now present the algorithms
in detail, and discuss their salient properties. A unified view of
the message passing is offered in Fig.~\ref{fig:msg_exchange}. 


\begin{figure}[t]
	\psfrag{ptij}[b][b][0.8]{\color[rgb]{0,0,0}\setlength{\tabcolsep}{0pt}\begin{tabular}{c}\vspace{-0.9mm}{$\, \, \tilde{p}(\mathbf{c}_{ij} | \bm{\vartheta}_i, \bm{\vartheta}_j)$}\end{tabular}}
	\psfrag{ti}[b][b][0.8]{\color[rgb]{0,0,0}\setlength{\tabcolsep}{0pt}\begin{tabular}{c}{\vspace{-1.2mm} \hspace{-0.3mm}$\bm{\vartheta}_i \ist$}\end{tabular}}
	\psfrag{tj}[b][b][0.8]{\color[rgb]{0,0,0}\setlength{\tabcolsep}{0pt}\begin{tabular}{c}{\vspace{-0.7mm} \hspace{0.3mm}$ \bm{\vartheta}_j$}\end{tabular}}
	\psfrag{msg1j}[l][c][0.8]{\color[rgb]{0,0,0}\setlength{\tabcolsep}{0pt} \vspace{5mm}\hspace{-8.5mm}
	\begin{tabular}{lr}
	BP: $m_{\bm{\vartheta}_j \to p_{ij} } \hspace{5mm}$ & \eqref{eq:BPmsg_pij_in} \\
	MF: $b_{\bm{\vartheta}_j } \hspace{5mm}$ & \eqref{eq:MF_marg}
	\end{tabular}}
	\psfrag{msg1i}[l][c][0.8]{\color[rgb]{0,0,0}\setlength{\tabcolsep}{0pt} \vspace{7mm}\hspace{-8.5mm}
	\begin{tabular}{lr}
	BP: $m_{\bm{\vartheta}_i \to p_{ij} } \hspace{5mm}$ & \eqref{eq:BPmsg_pij_in} \\
	MF: $b_{\bm{\vartheta}_i } \hspace{5mm}$ & \eqref{eq:MF_marg}
	\end{tabular}}
	\psfrag{msg2j}[l][c][0.8]{\color[rgb]{0,0,0}\setlength{\tabcolsep}{0pt} \vspace{7mm}\hspace{-8mm}
	\begin{tabular}{lr}
	$m_{\bm{p_{ij} \to \vartheta}_j } \hspace{5mm}$ & \eqref{eq:BPmsg_thetai_in} or \eqref{eq:MFmsg_thetai_in} 
	\end{tabular}}
	\psfrag{msg2i}[l][c][0.8]{\color[rgb]{0,0,0}\setlength{\tabcolsep}{0pt} \vspace{7mm}\hspace{-8mm}
	\begin{tabular}{lr}
	$m_{\bm{p_{ij} \to \vartheta}_i } \hspace{5mm}$ & \eqref{eq:BPmsg_thetai_in} or \eqref{eq:MFmsg_thetai_in} 
	\end{tabular}}
	\psfrag{ci}[l][c][0.8]{\color[rgb]{0.5,0.5,0.5}\setlength{\tabcolsep}{0pt}\begin{tabular}{c} \hspace{-2mm} \vspace{-1mm} Computed by node~$i$ \end{tabular}}
	\psfrag{cj}[l][c][0.8]{\color[rgb]{0.5,0.5,0.5}\setlength{\tabcolsep}{0pt}\begin{tabular}{c} \hspace{-31mm} \vspace{-1mm} Computed by node~$j$ \end{tabular}}
	\centering{}\includegraphics[width=0.7\columnwidth]{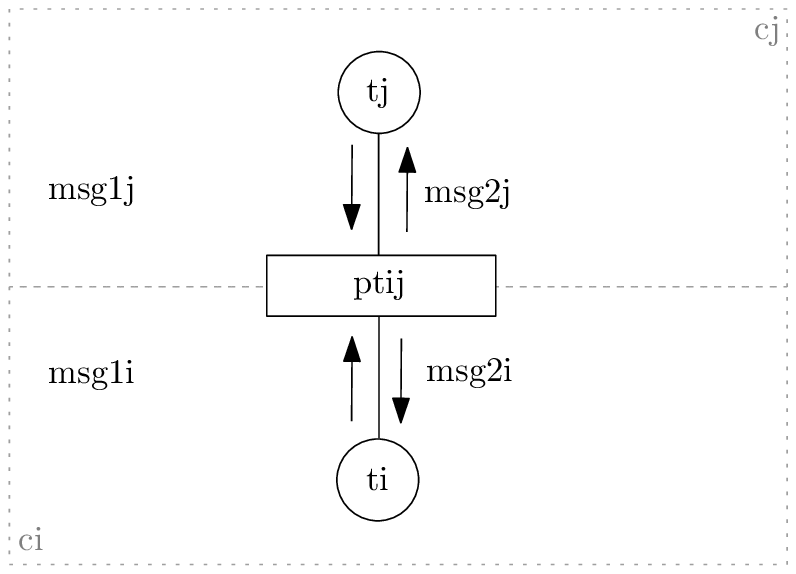}
	\caption{Messages between a node pair $i,j$ of a
	general network. Since the measurements are shared, both nodes have
	access to the same function vertex.}
	\label{fig:msg_exchange} 
\end{figure}	

\subsubsection{Belief Propagation}

The BP message from a factor vertex $p_{ij}$ to a variable vertex $\bm{\vartheta}_i$ is given by \cite[Eq. (6)]{kschischang}
\begin{align}
m_{{p}_{ij}\rightarrow\bm{\vartheta}_{i}}(\bm{\vartheta}_{i}) & =\int {p}_{ij}(\bm{\vartheta}_{i}, \bm{\vartheta}_{j})\ m_{\bm{\vartheta}_{j}\rightarrow{p}_{ij}}(\bm{\theta}_{j}') \,\mathrm{d}\bm{\theta}_j'\nonumber \\
 & \propto\mathcal{N}_{\bm{\vartheta}_{i}}\left(\boldsymbol{\mu}_{\text{in},ij},\boldsymbol{\Sigma}_{\text{in},ij}\right),\label{eq:BPmsg_thetai_in}
\end{align}
while the BP message from a variable vertex $\bm{\vartheta}_i$ to a factor vertex  $p_{ij}$ is given by \cite[Eq. (5)]{kschischang}
\begin{align}
m_{\bm{\vartheta}_{i}\rightarrow{p}_{ij}}(\bm{\vartheta}_{i}) & = p(\bm{\vartheta}_{i})\prod_{k\in\{\mathcal{T}_i\backslash j\}}\ m_{{p}_{ik}\rightarrow\bm{\vartheta}_{i}}(\bm{\vartheta}_{i})\nonumber \\
 & \propto\mathcal{N}_{\bm{\vartheta}_{i}}\left(\boldsymbol{\mu}_{\text{ext},ij},\boldsymbol{\Sigma}_{\text{ext},ij}\right),\label{eq:BPmsg_pij_in}
\end{align}
 where we use the index ``in'' for intrinsic and ``ext'' for extrinsic
with respect to a variable vertex. As depicted in Fig.~\ref{fig:msg_exchange}, 
each network node $i$ corresponding to the variable vertex $\bm{\vartheta}_i$ 
needs to compute its intrinsic and extrinsic message. Furthermore, note that for
BP, the extrinsic message $m_{\bm{\vartheta}_{i}\rightarrow{p}_{ij}}$
has to be determined separately for every node $j\in \mathcal{T}_i$. If the neighboring node is an
agent, $j\in\mathcal{T}_i\cap\mathcal{A}$,
the parameter updates (for detailed derivations, see Appendix \ref{ap:BP_update})
of \eqref{eq:BPmsg_thetai_in} are
\begin{subequations}
  \begin{align}
  \mathbf{Q} & =\mathbf{A}_{ij}^{\mathrm{T}}\mathbf{B}_{ij}^{}\left(\mathbf{B}_{ij}^{\mathrm{T}}\mathbf{B}_{ij}^{}+\sigma_{w}^{2}\boldsymbol{\Sigma}_{\text{ext},ji}^{-1}\right)^{-1}\nonumber \\
  \boldsymbol{\Sigma}_{\text{in},ij}^{-1} & =\frac{1}{\sigma_{w}^{2}}\mathbf{A}_{ij}^{\mathrm{T}}\mathbf{A}_{ij}^{}-\frac{1}{\sigma_{w}^{2}} \mathbf{Q}\,\mathbf{B}_{ij}^{\mathrm{T}}\mathbf{A}_{ij}^{}\label{eq:BP_Var_thetai_in}\\
  \boldsymbol{\Sigma}_{\text{in},ij}^{-1}\boldsymbol{\mu}_{\text{in},ij} & =-\mathbf{Q}\,\boldsymbol{\Sigma}_{\text{ext},ji}^{-1}\boldsymbol{\mu}_{\text{ext},ji},\label{eq:BP_Mean_thetai_in}
  \end{align}
\end{subequations}
 and if the neighboring node is a master, $j\in\mathcal{T}_i\cap\mathcal{M}$
\begin{subequations}
  \begin{align}
  \boldsymbol{\Sigma}_{\text{in},ij}^{-1} & =\frac{1}{\sigma_{w}^{2}}\mathbf{A}_{ij}^{\mathrm{T}}\mathbf{A}_{ij}^{}\label{eq:BP_Var_thetai_inMaster}\\
  \boldsymbol{\Sigma}_{\text{in},ij}^{-1}\boldsymbol{\mu}_{\text{in},ij}^{} & =-\frac{1}{\sigma_{w}^{2}}\mathbf{A}_{ij}^{\mathrm{T}}\mathbf{B}_{ij}^{}\boldsymbol{\mu}_{\text{ext},ji}^{}.\label{eq:BP_Mean_thetai_inMaster}
  \end{align}
\end{subequations}
 The parameter updates of \eqref{eq:BPmsg_pij_in} are 
\begin{subequations}
  \begin{align}
  \boldsymbol{\Sigma}_{\text{ext},ij}^{-1} & =\boldsymbol{\Sigma}_{\text{p},i}^{-1}+\sum_{k\in\{\mathcal{T}_i\backslash j\}}\boldsymbol{\Sigma}_{\text{in},ki}^{-1}\label{eq:BP_Var_pij_in}\\
  \boldsymbol{\Sigma}_{\text{ext},ij}^{-1}\boldsymbol{\mu}_{\text{ext},ij} & =\boldsymbol{\Sigma}_{\text{p},i}^{-1}\boldsymbol{\mu}_{\text{p},i}+\sum_{k\in\{\mathcal{T}_i\backslash j\}}\boldsymbol{\Sigma}_{\text{in},ki}^{-1}\boldsymbol{\mu}_{\text{in},ki}^{}.\label{eq:BP_Mean_pij_in}
  \end{align}
\end{subequations}
The approximate marginal is obtained by 
\begin{align}
b_{i}(\bm{\vartheta}_{i}) & \propto p(\bm{\vartheta}_{i}) \prod_{k\in\mathcal{T}_i}\ m_{{p}_{ik}\rightarrow\bm{\vartheta}_{i}}(\bm{\vartheta}_{i})\nonumber \\
 & \propto\mathcal{N}_{\bm{\vartheta}_{i}}\left(\boldsymbol{\mu}_{i},\boldsymbol{\Sigma}_{i}\right).\label{eq:BP_marg}
\end{align}
The parameters of the marginal belief \eqref{eq:BP_marg} are computed
from the parameters in \eqref{eq:BP_Var_pij_in} and \eqref{eq:BP_Mean_pij_in},
but with the additional summation over $j$.

In the communication between two connected nodes $i$ and $j$ as
in Fig.~\ref{fig:msg_exchange}, node $i$ transmits $m_{\bm{\vartheta}_{i}\rightarrow{p}_{ij}}$
to $j$ and vice versa. The receiving node then computes its intrinsic
message to the variable vertex (e.g., node $j$ computes $m_{{p}_{ij}\rightarrow\bm{\vartheta}_{j}}$).
As a node $i$ has evaluated the intrinsic messages from all its neighbors,
it can determine again its extrinsic messages. After $I$ iterations,
every node $i$ computes the marginal belief $b_{j}(\bm{\vartheta}_j)$ and thereof 
the MAP estimates of its clock parameters.


\subsubsection{Mean Field}

The MF message from a factor vertex $p_{ij}$ to a variable vertex $\bm{\vartheta}_i$ is given by \cite[Eq. (14)]{dauwels}
\begin{align}
m_{{p}_{ij}\rightarrow\bm{\vartheta}_{i}}(\bm{\vartheta}_{i}) & =\exp\left(\int\log\left({p}_{ij}(\bm{\vartheta}_{i}, \bm{\vartheta}_{j})\right)\ b_{j}(\bm{\vartheta}_{j}) \, \mathrm{d}\bm{\theta}_{j}'\right)\nonumber \\
 & \propto\mathcal{N}_{\bm{\vartheta}_{i}}\left(\boldsymbol{\mu}_{\text{in},ij},\boldsymbol{\Sigma}_{\text{in},ij}\right), \label{eq:MFmsg_thetai_in}
\end{align}
and the message from a variable vertex $\bm{\vartheta}_i$ to a factor vertex $p_{ij}$ is given by the
belief \cite[Eq. (16)]{dauwels}
\begin{align}
b_{i}(\bm{\vartheta}_{i}) & \propto p(\bm{\vartheta}_{i})\prod_{k\in\mathcal{T}_i}\ m_{{p}_{ik}\rightarrow\bm{\vartheta}_{i}}(\bm{\vartheta}_{i})\nonumber \\
 & \propto\mathcal{N}_{\bm{\vartheta}_{i}}\left(\boldsymbol{\mu}_{i},\boldsymbol{\Sigma}_{i}\right).\label{eq:MF_marg}
\end{align}
The corresponding parameter updates (for detailed derivations, see
Appendix \ref{ap:MF_update}) are
\begin{subequations}
  \begin{align}
  \boldsymbol{\Sigma}_{\text{in},ij}^{-1} & =\frac{1}{\sigma_{w}^{2}}\mathbf{A}_{ij}^{\mathrm{T}}\mathbf{A}_{ij}^{}\label{eq:MF_Var_thetai_in}\\
  \boldsymbol{\Sigma}_{\text{in},ij}^{-1}\boldsymbol{\mu}_{\text{in},ij} & =-\frac{1}{\sigma_{w}^{2}}\mathbf{A}_{ij}^{\mathrm{T}}\mathbf{B}_{ij}^{}\boldsymbol{\mu}_{j}^{}\label{eq:MF_Mean_thetai_in}
  \end{align}
\end{subequations}
and 
\begin{subequations}
  \begin{align}
  \boldsymbol{\Sigma}_{i}^{-1} & =\boldsymbol{\Sigma}_{\text{p},i}^{-1}+\sum_{k\in\mathcal{T}_i}\boldsymbol{\Sigma}_{\text{in},ki}^{-1}\label{eq:MF_Var_pij_in}\\
  \boldsymbol{\Sigma}_{i}^{-1}\boldsymbol{\mu}_{i} & =\boldsymbol{\Sigma}_{\text{p},i}^{-1}\boldsymbol{\mu}_{\text{p},i}+\sum_{k\in\mathcal{T}_i}\boldsymbol{\Sigma}_{\text{in},ki}^{-1}\boldsymbol{\mu}_{\text{in},ki}.\label{eq:MF_Mean_pij_in}
  \end{align}
\end{subequations}
 For MF, two connected nodes $(i,j)$ only need to exchange their
beliefs instead of extrinsic information (see Fig.~\ref{fig:msg_exchange}).
Since the same information is sent to all neighbors, this can also
be performed in a broadcast scheme. From the belief, the receiver
can then compute the intrinsic message \eqref{eq:MFmsg_thetai_in}.

\pagebreak
\subsection{Convergence}
\label{ssec:convergence}
\highlight{ \subsubsection{Mean Field}
MF optimizes node potentials, and for successive message updates, it is known to converge \cite[Theorem 11.10]{koller09} as the energy functional is monotonically decreasing and bounded. In Gaussian models, depending on the message ordering, MF converges to the true mean vectors \cite[pp.136]{wainwright}. 
\subsubsection{Belief Propagation}
BP optimizes node and edge potentials, and convergence in cyclic graphs depends on the underlying system. For Gaussian models,  
several sufficient conditions based on analysis of message propagation on the computation tree exist. These include 
diagonal dominance \cite{weiss} 
or walk-summability \cite{malioutov06}, and FG normalizabilitiy \cite{malioutov}, all of which can be evaluated via 
the information matrix \eqref{eq:MAP_covar}. 
In the following, we will prove the convergence of the proposed algorithms based on FG normalizabilitiy, 
which is a variant of the walk-sum interpretation
\begin{theorem}
\label{pr:Conv_prior}
  The variances of the proposed BP algorithm converge for connected networks without MNs, if each node has a prior with finite variance on skew and phase.
\end{theorem}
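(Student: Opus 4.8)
The plan is to prove convergence via \emph{FG normalizability} (walk-summability), the sufficient condition cited from \cite{malioutov}. The key object is the inverse covariance matrix $\tilde{\bm{\Sigma}}^{-1}$ from \eqref{eq:MAP_covar}, which fully specifies the Gaussian posterior and, per the cited convergence theory, determines whether Gaussian BP converges. First I would recall that for Gaussian BP on a pairwise graphical model, a sufficient condition for convergence of the variance (and fixed-point) iterations is that the information matrix be walk-summable, which can be established by showing that $\tilde{\bm{\Sigma}}^{-1}$ is a symmetric positive definite matrix that is additionally \emph{diagonally dominant} (or, more generally, that the spectral radius of the absolute value of the normalized off-diagonal part is strictly less than one). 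So the whole argument reduces to verifying a structural property of the matrix in \eqref{eq:MAP_covar}.

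The core of the proof is to show that the finite-variance prior on every node guarantees the required diagonal dominance. I would write the diagonal block as $[\tilde{\bm{\Sigma}}^{-1}]_{i,i} = \bm{\Sigma}_{\text{p},i}^{-1} + \sum_{j\in\mathcal{T}_i} \tfrac{1}{\sigma_w^2}\mathbf{A}_{ij}^{\mathrm{T}}\mathbf{A}_{ij}$ and the off-diagonal block as $[\tilde{\bm{\Sigma}}^{-1}]_{i,j} = \tfrac{1}{\sigma_w^2}\mathbf{A}_{ij}^{\mathrm{T}}\mathbf{B}_{ij}$. The essential observation, which the paper has already flagged, is the identity $\mathbf{A}_{ji}^{\mathrm{T}}\mathbf{A}_{ji} = \mathbf{B}_{ij}^{\mathrm{T}}\mathbf{B}_{ij}$. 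Using Cauchy--Schwarz or the arithmetic--geometric structure of these blocks, the off-diagonal coupling $\tfrac{1}{\sigma_w^2}\mathbf{A}_{ij}^{\mathrm{T}}\mathbf{B}_{ij}$ is controlled in norm by the two pairwise self-terms $\tfrac{1}{\sigma_w^2}\mathbf{A}_{ij}^{\mathrm{T}}\mathbf{A}_{ij}$ and $\tfrac{1}{\sigma_w^2}\mathbf{B}_{ij}^{\mathrm{T}}\mathbf{B}_{ij} = \tfrac{1}{\sigma_w^2}\mathbf{A}_{ji}^{\mathrm{T}}\mathbf{A}_{ji}$ that appear in the respective diagonal blocks of nodes $i$ and $j$. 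Hence, absent the prior, the information matrix would sit exactly on the boundary of diagonal dominance (the likelihood factors alone are only positive semidefinite, being rank-deficient in the transformed parameters). The strictly positive definite prior term $\bm{\Sigma}_{\text{p},i}^{-1}$, which exists precisely because every node has finite variance on both skew \emph{and} phase, provides the strict surplus on each diagonal block that pushes the matrix strictly inside the diagonally-dominant (equivalently, walk-summable) cone.

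Concretely, I would argue that $\bm{\Sigma}_{\text{p},i}^{-1} \succ \mathbf{0}$ for every $i$ makes $\tilde{\bm{\Sigma}}^{-1}$ strictly generalized diagonally dominant, so that after the standard diagonal normalization $\mathbf{R} = \mathbf{I} - \mathbf{D}^{-1/2}\tilde{\bm{\Sigma}}^{-1}\mathbf{D}^{-1/2}$ (with $\mathbf{D}$ the block diagonal) the partial correlation matrix $\mathbf{R}$ satisfies $\varrho(|\mathbf{R}|) < 1$. This is exactly the walk-summability / FG-normalizability criterion of \cite{malioutov}, and it implies that the variance updates of Gaussian BP converge to a unique fixed point regardless of the cyclic structure of the connected network. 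Connectedness is used only to ensure the graph is a single component so that the information matrix is irreducible and the fixed point is unique; it is otherwise not needed for the variance convergence itself.

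The main obstacle I anticipate is making the bound on the off-diagonal coupling rigorous: one must show that the norm of $\mathbf{A}_{ij}^{\mathrm{T}}\mathbf{B}_{ij}$ is dominated by the geometric mean of $\mathbf{A}_{ij}^{\mathrm{T}}\mathbf{A}_{ij}$ and $\mathbf{B}_{ij}^{\mathrm{T}}\mathbf{B}_{ij}$ in the precise sense required by the spectral-radius condition, and that summing these pairwise contributions across $j\in\mathcal{T}_i$ does not overwhelm the finite prior contribution on node $i$. Because the likelihood blocks are only semidefinite and the excess must come entirely from the prior, the argument hinges delicately on the finiteness of \emph{both} entries of $\bm{\Sigma}_{\text{p},i}$ — a node with an improper (infinite-variance) prior on the phase would leave a null direction in $\tilde{\bm{\Sigma}}^{-1}$ along which diagonal dominance fails. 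This is precisely why the finite-variance prior hypothesis on both skew and phase is indispensable, and why the master-free case (where no node is pinned to a Dirac prior) is the natural setting for this statement.
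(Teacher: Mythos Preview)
Your overall target---showing FG normalizability in the sense of \cite{malioutov}---is exactly the paper's target, but you take a much harder route and leave the crucial step open. You work with the \emph{global} information matrix $\tilde{\bm{\Sigma}}^{-1}$ and try to verify walk-summability via (generalized) diagonal dominance. As you yourself note, this requires a quantitative bound showing that the fixed prior surplus $\bm{\Sigma}_{\text{p},i}^{-1}$ dominates, in the spectral-radius sense, the aggregate off-diagonal coupling $\sum_{j\in\mathcal{T}_i}\tfrac{1}{\sigma_w^2}\mathbf{A}_{ij}^{\mathrm{T}}\mathbf{B}_{ij}$. But those coupling terms scale with the time stamps and the number of measurements, while the prior precision does not, so classical diagonal dominance will generally \emph{fail} here; there is no reason the finite prior should ``overwhelm'' arbitrarily large likelihood blocks. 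The obstacle you anticipate is therefore not a technicality but a genuine gap in this line of argument.

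The paper avoids this entirely by working at the level of the \emph{pairwise factors} rather than the global matrix. The precision of each local likelihood is
\[
\mathbf{J}_{ij} \;=\; \tfrac{1}{\sigma_w^2}\,[\mathbf{A}_{ij}\;\mathbf{B}_{ij}]^{\mathrm{T}}[\mathbf{A}_{ij}\;\mathbf{B}_{ij}] \;\succeq\; \mathbf{0},
\]
which is only positive \emph{semi}definite (this is the ``boundary'' phenomenon you correctly identify). The paper then distributes each node's prior precision equally among its incident pairwise factors, replacing $\mathbf{J}_{ij}$ by $\mathbf{J}_{ij}' = \mathbf{J}_{ij} + \tfrac{1}{|\mathcal{T}_i|}\bm{\Sigma}_{\text{p},i}^{-1} + \tfrac{1}{|\mathcal{T}_j|}\bm{\Sigma}_{\text{p},j}^{-1}$. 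Since the finite-variance hypothesis gives $\bm{\Sigma}_{\text{p},i}^{-1}\succ\mathbf{0}$ for every $i$, each $\mathbf{J}_{ij}'$ is strictly positive definite. This is precisely the FG-normalizability criterion of \cite[Prop.~4.3.3]{malioutov}, and convergence of the BP variances follows immediately---no norm estimates, no spectral-radius computation, and no dependence on the relative sizes of prior and likelihood terms. Your intuition about why the finite-variance prior on \emph{both} parameters is indispensable is correct; the fix is simply to apply it factor-by-factor rather than trying to push a global diagonal-dominance bound through.
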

\begin{proof}
See Appendix \ref{sec:proof1}. 
\end{proof}
\begin{theorem}
      \label{pr:Conv_master}
      The variances of the proposed BP algorithm converge for all connected networks with at least one MN.
\end{theorem}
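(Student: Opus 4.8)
The plan is to establish the same sufficient condition used for Theorem \ref{pr:Conv_prior}, namely factor-graph normalizability of the posterior information matrix $\tilde{\bm{\Sigma}}^{-1}$ in \eqref{eq:MAP_covar}; by the walk-sum characterization of Gaussian BP this guarantees convergence of the BP variances. Since the BP variances are governed entirely by $\tilde{\bm{\Sigma}}^{-1}$ and not by the potential vector, it suffices to reason about this matrix. The essential difference from Theorem \ref{pr:Conv_prior} is that we can no longer assume a finite-variance phase prior on every AN, so the diagonal self-information $[\tilde{\bm{\Sigma}}^{-1}]_{i,i}$ is in general only positive semidefinite in the phase coordinate; the perfect information carried by the MNs, together with connectivity, must supply the missing phase anchoring.

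First I would eliminate the MN variables. Because $p(\bm{\vartheta}_i) = \delta(\bm{\vartheta}_i - \bm{\vartheta}_i^\ast)$ for $i \in \mathcal{M}$, each MN variable is a fixed constant, so every factor joining an AN $i$ to an MN $j$ reduces, as a function of $\bm{\vartheta}_i$, to a Gaussian pseudo-prior with information $\frac{1}{\sigma_w^2}\mathbf{A}_{ij}^{\mathrm{T}}\mathbf{A}_{ij}$; the reduced model then contains AN variables only, with the self-information of each MN-adjacent AN augmented by this full-rank term, which anchors that AN in both skew and phase. Note that a naive reduction to Theorem \ref{pr:Conv_prior} fails here, because only the MN-adjacent ANs acquire a phase anchor in this way, whereas ANs farther from any MN still carry uninformative phase priors; this is precisely why a connectivity-based propagation argument is needed.

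Next I would exhibit the normalizing decomposition. Each AN--AN factor contributes the block $\frac{1}{\sigma_w^2}[\mathbf{A}_{ij}\ \mathbf{B}_{ij}]^{\mathrm{T}}[\mathbf{A}_{ij}\ \mathbf{B}_{ij}]$, which is a Gram matrix and hence positive semidefinite, while the node priors $\bm{\Sigma}_{\text{p},i}^{-1}$ and the MN-induced terms $\frac{1}{\sigma_w^2}\mathbf{A}_{ij}^{\mathrm{T}}\mathbf{A}_{ij}$ are likewise positive semidefinite. Thus $\tilde{\bm{\Sigma}}^{-1}$ already admits a pairwise decomposition into positive semidefinite factors, so FG normalizability reduces to proving $\tilde{\bm{\Sigma}}^{-1} \succ \mathbf{0}$.

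The crux, and the step I expect to be the main obstacle, is this positive definiteness, which I would argue by showing the null space is trivial. If $\bm{v}^{\mathrm{T}}\tilde{\bm{\Sigma}}^{-1}\bm{v} = 0$, then every positive semidefinite summand must vanish, so $\mathbf{A}_{ij}\bm{v}_i = \mathbf{0}$ for any MN-adjacent AN $i$ (using $\bm{v}_j = \mathbf{0}$ at the MN), forcing $\bm{v}_i = \mathbf{0}$ when $\mathbf{A}_{ij}$ has full column rank; then for each AN--AN edge the residual $\mathbf{A}_{ki}\bm{v}_k + \mathbf{B}_{ki}\bm{v}_i = \mathbf{0}$ propagates $\bm{v}_k = \mathbf{0}$ to neighbors, and by connectivity every $\bm{v}_i = \mathbf{0}$. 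The delicate points are to verify that the link matrices have full column rank, so that a vanishing residual forces a vanishing increment, and to handle any degenerate link by falling back on the finite skew prior that every AN carries: the skew prior pins the skew coordinate everywhere, while the single MN together with connectivity is what pins the otherwise-unconstrained phase coordinate throughout the network. Establishing $\tilde{\bm{\Sigma}}^{-1} \succ \mathbf{0}$ then completes FG normalizability and hence the convergence of the BP variances.
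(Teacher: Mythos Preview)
Your reduction contains a genuine gap at the step ``FG normalizability reduces to proving $\tilde{\bm{\Sigma}}^{-1}\succ\mathbf{0}$.'' Pairwise (factor-graph) normalizability in the sense of \cite{malioutov} requires that each \emph{edge} factor be individually positive definite on its two-node support, i.e., that each $\phi_{ij}$ is a valid normalizable Gaussian. You only have that each edge block $\frac{1}{\sigma_w^2}[\mathbf{A}_{ij}\ \mathbf{B}_{ij}]^{\mathrm{T}}[\mathbf{A}_{ij}\ \mathbf{B}_{ij}]$ is positive \emph{semi}definite (it is rank-deficient because the second columns of $\mathbf{A}_{ij}$ and $\mathbf{B}_{ij}$ are perfectly correlated), and the additional PD mass from the MN sits only at MN-adjacent nodes, not on every edge. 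Global positive definiteness of $\tilde{\bm{\Sigma}}^{-1}$ together with a PSD edge decomposition does \emph{not} imply pairwise normalizability, and it is well known that there exist valid Gaussian models ($\mathbf{J}\succ\mathbf{0}$) on loopy graphs for which BP variances fail to converge. So your null-space/connectivity argument, while correct as a proof that the joint model is proper, does not control the BP dynamics. The paper itself flags exactly this obstacle at the start of its proof: ``In this case we cannot guarantee positive definite (p.d.) precision matrices in the factor vertices.''

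The paper therefore abandons the normalizability shortcut and argues directly on the computation tree. It shows, via a Schur-complement calculation and the Woodbury identity, that marginalizing a leaf connected to a MN yields a strictly PD contribution that propagates toward the root; hence any path in the tree touching a MN-adjacent vertex contributes a PD term to the root's precision. This gives (i) a lower bound on the BP variances at every depth, and (ii) monotone decrease as the tree deepens, so the variances converge. Your connectivity/propagation intuition is morally the same mechanism, but it must be applied to the \emph{sequence of BP precision updates on the computation tree}, not to the static global information matrix. To repair your argument you would either need to redistribute the MN-induced PD term along a spanning tree so that every edge factor becomes strictly PD (and then invoke \cite[Prop.~4.3.3]{malioutov}), or follow the paper and track the Schur complements along paths in the computation tree.
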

\begin{proof}
See Appendix \ref{sec:proof2}. 
\end{proof}}

\highlight{
Once the variances converge, the mean updates follow a linear system. As shown in \cite{malioutov}, the convergence to the correct means can be forced by sufficient damping. In our numerical analysis, we did not encounter a single case where damping was necessary.
}

\subsection{Scheduling and Implementation Aspects}

\label{ssec:implemenationAspects}

In this section, we discuss message scheduling and ways to efficiently
combine timing information exchange and message passing in real applications.

We consider a general topology as in Fig.~\ref{fig:FG_network}.
Every node $i\in\mathcal{A}$ runs the same algorithm, and computes
the message parameters to/from the function vertices as depicted in
Fig.~\ref{fig:msg_exchange}. Therefore, the node requires $m_{\bm{\vartheta}_{j}\rightarrow{p}_{ij}}(\bm{\vartheta}_{j})$
to compute $m_{{p}_{ij}\rightarrow\bm{\vartheta}_{i}}(\bm{\vartheta}_{i})$.
Together with the prior information, the node then computes the outgoing
message $m_{\bm{\vartheta}_{i}\rightarrow{p}_{ij}}(\bm{\vartheta}_{i})$,
which is sent to neighbor $j$ for the next iteration. In order to
start this procedure, all $m_{\bm{\vartheta}_{j}\rightarrow p_{ji}}(\bm{\vartheta}_{j})$
in the factor graph have to be initialized. This can be done by setting
them to uniform distributions, with zero mean value and infinite covariance. 

\highlight{
In general, all nodes work in parallel for all iterations. As discussed in Sec.~\ref{ssec:convergence}, MF convergence guarantees are only available for specific schedules. Since these are generally not practical in real applications, we propose a mixed serial/parallel MF schedule as follows:
A node only 
updates its beliefs
if information from a MN 
has propagated via any path to the node.}
Hence, in the first iteration only MNs $m\in\mathcal{M}$
propagate messages. In the second iteration, also their neighbors
$j\in \mathcal{T}_m$ will send messages, in the third their neighbors'
neighbors and so on. The schedule is serial in the initial information propagation, 
\highlight{and a compromise of the successive message updates and a parallel schedule. A similar schedule can be applied for MF if no MNs are available. Thereby, only nodes in the neighborhood of the AN which initializes 
the synchronization start to join the protocol. In this case, the initializing node also adjusts its clock parameters.} 

Finally, our derivations were based on the assumption that measurements
were collected first, and then message passing was carried out. Since both
phases rely on the exchange of packets between nodes, it is possible
to combine them, thereby increasing the number of measurements
as message passing iterations progress. Such piggybacking is beneficial
in real applications. In order to successfully start the algorithm,
a minimum number of measurement packets has to be exchanged between
every node pair to provide initial timing information. This is due
to the required matrix inversions in the message parameter computations.

\begin{table}[t]
\renewcommand{\arraystretch}{1.2} 
\caption{Complexity per estimation update}
\label{tab:alg_comp} 
\centering{}\vspace{-0.3cm}
\begin{tabular}{|l||c|c|c|c|c|}
\hline
& Num. Operations & Num. Transm.\\
\hline \hline
MF & once: $6 \, |\mathcal{T}_i| \,(K_{ij}+K_{ji}) \, O$ & $K_{ij}$\\
& $14 \, | \mathcal{T}_i |  \, O$ & 1\\
\hline
BP & once: $9 \, |\mathcal{T}_i| \,(K_{ij}+K_{ji}) \, O$ & $K_{ij}$ \\
 & $(5 \, | \mathcal{T}_i | ^2 + 35 \, | \mathcal{T}_i | \,) \, O$ & 1\\
\hline
ATS \cite{schenato} & $21 \, | \mathcal{T}_i | \, O$ & 1\\
\hline 
ADMM \cite{zennaro} & $9 I \,| \mathcal{T}_i |  O$ & $2 I$ \\%
\hline
LC \cite{solis} & $29 \, | \mathcal{T}_i | \, O$ & $2$\\
\hline
\end{tabular}
\vspace{-2mm}
\end{table}

\subsection{Comparison with State-of-the-Art Algorithms}
%
We will now describe the main similarities and differences of BP and
MF to the previously described state-of-the-art algorithms from Section~\ref{sec:stateoftheart}.
We will use the shorthand: ATS is Average TimeSync from \cite{schenato},
ADMM is the method proposed in \cite{zennaro}, and LC is the loop constraint
method from \cite{solis}.
\highlight{
\subsubsection{Complexity} 
We compare the complexity per node $i \in \mathcal{A}$ with the number of operations needed per estimation update.
In Table \ref{tab:alg_comp} we provide a complexity estimate, where 
the operations $+,-,*,\sqrt{\cdot},\log$ and $\exp$ are equated with by one operation cost $O$, and only factors containing the number of neighbors $| \mathcal{T}_i|$ are considered.
As MF and BP require a measurement phase, 
they must additionally transmit measurement packets and then compute matrix products of $\mathbf{A}_{ij}$ and $\mathbf{B}_{ij}$.
For ADMM, $I=1$ inner iterations and the estimation 
of $\alpha_{ij}$ using a PLL were considered\footnote{\highlight{The estimation accuracy was considered of $0.5\,$ppm, which corresponds to observations on Texas Instruments ez430-RF2500 evaluation boards. If no PLL information is available, $\alpha_{ij}$ can be achieved using \cite{noh} based on time stamped packet exchange and additional computations.}}.
It can be seen that all methods scale linearly with the number of neighbors $| \mathcal{T}_i|$, only BP scales quadratically. ADMM has the lowest complexity per transmitted packet, followed by LC, MF, ATS, and BP. For a full complexity analysis, the total number of transmitted packets upon convergence must be considered. This is done in Sec.~\ref{ssec:num_comp_methods}.}
\subsubsection{Delay sensitivity} 
Propagation delays $\delta_{ij}$ influence the performance of the algorithm if not considered correctly. 
BP, MF and LC algorithms consider the delay as symmetric and unknown. Moreover, in ADMM and in ATS it is disregarded and equated to zero.
Thus, the accuracy of ADMM and ATS is decreased if $\delta_{ij}$ increases, i.e., by additional deterministic delays.
\subsubsection{Master nodes} As discussed at the convergence section, BP and MF can operate with and without MNs. 
LC, ATS and ADMM do not consider the use of a time reference.
\subsubsection{Broadcast protocols} For ATS, LC, and MF, a node $i$ needs to pass identical values to all neighbors $j \in \mathcal{T}_i$. ADMM and BP have destination-specific messages. 
In principle, this can also be accomplished by broadcast messages, when stacking the information to all neighbors in one packet. Thus all algorithms can be used with broadcast protocols, however ADMM and BP have higher bandwidth requirements.

The remaining question regarding the estimation accuracy is addressed in
the following section, where a superior behavior of BP and MF is observed.


\section{Numerical Analysis}

\label{sec:NumAn} 

\begin{figure}[t]
  \psfrag{s01}[t][t][0.8]{\color[rgb]{0,0,0}\setlength{\tabcolsep}{0pt}\begin{tabular}{c}$x$-position [m]\end{tabular}}%
  \psfrag{s02}[b][b][0.8]{\color[rgb]{0,0,0}\setlength{\tabcolsep}{0pt}\begin{tabular}{c}$y$-position [m]\end{tabular}}%
  \psfrag{x01}[t][t][0.7]{0}%
  \psfrag{x02}[t][t][0.7]{250}%
  \psfrag{x03}[t][t][0.7]{500}%
  \psfrag{x04}[t][t][0.7]{750}%
  \psfrag{x05}[t][t][0.7]{1000}%
  \psfrag{v01}[r][r][0.7]{0}%
  \psfrag{v02}[r][r][0.7]{250}%
  \psfrag{v03}[r][r][0.7]{500}%
  \psfrag{v04}[r][r][0.7]{750}%
  \psfrag{v05}[r][r][0.7]{1000}%
  \centering{\includegraphics[width=0.7\columnwidth]{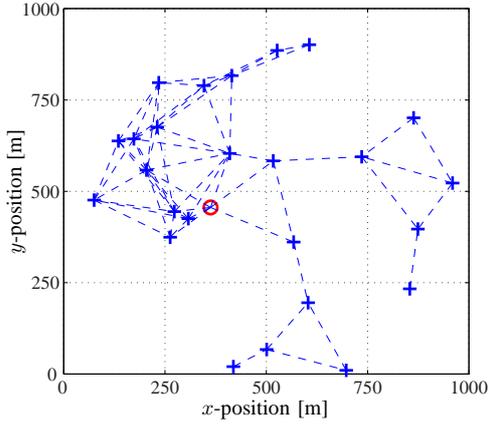}
  \caption{\label{fig:network}Randomly connected network with $M=1$
  (circle) and $A=25$ (cross).}
  }
\end{figure}

\subsection{Simulation Settings}

If not specified otherwise, we use the delay and noise setting from
the measurements in Fig.~\ref{fig:delay_spread}. In particular,
the noise standard deviation is $\sigma_{w}=93\,$ns and the
deterministic delay $\Delta_{ij}\rmv = \rmv T_{c}+T_{f,ij}$ comprises a computational
time $T_{c}\rmv=  \linebreak 7.6\,\mu\mathrm{s}$ and the flight time $T_{f,ij}\rmv=\rmv d_{ij}/v$,
where $d_{ij}$ is the distance between nodes $i$ and $j$, and $v$
is the speed of light.
Simulations were carried out on randomly generated topologies with 26 nodes, as depicted in Fig.~\ref{fig:network}.
We further select $K_{ij}\rmv=\rmv K_{ji}$ between
all node pairs, where a measure-ment from node $i$ to node $j$ is
always followed with a measurement from node $j$ to node $i$. The
time between two subsequent measurements is set to $10\,$ms. The clock
skews are drawn from a normal distribution corresponding to a 100 ppm
specification,$\,\alpha_{i}\!\sim\!\mathcal{N}(\rmv 1,\rmv 10^{-8}\rmv)$, which represents the prior distribution.
The clock phases are drawn from a uniform distribution in the interval $[-10,+10]\,$s, 
where the Gaussian prior was specified with zero mean and a standard deviation of $\sigma_{\beta,i} = 5.8\,$s.
In the following, we will use the root mean square error as performance measure,
denoted by ``RMSE of phase'' and ``RMSE of skew''. Since not all 
competing algorithms rely on a MN, the RMSE to the true clock parameters 
is not a meaningful measure in a direct comparison. Thus, for algorithms not 
supporting MNs, the RMSE is evaluated with respect to the network's 
mean error of skew and phase.

\subsection{Study of BP and MF Synchronization}

\subsubsection{Convergence Rate}

\begin{figure}[t]
  \begin{centering}
    \begin{psfrags}
    \psfrag{s02}[b][b][\mlab]{\color[rgb]{0,0,0}\setlength{\tabcolsep}{0pt}\begin{tabular}{c}RMSE on phase [ms]\end{tabular}}%
    \psfrag{s05}[l][l]{\color[rgb]{0,0,0}BCRB}%
    \psfrag{s06}[l][l][\slab]{\color[rgb]{0,0,0}MF}%
    \psfrag{s07}[l][l][\slab]{\color[rgb]{0,0,0}BP}%
    \psfrag{s08}[l][l][\slab]{\color[rgb]{0,0,0}BCRB}%
    \psfrag{s10}[][]{\color[rgb]{0,0,0}\setlength{\tabcolsep}{0pt}\begin{tabular}{c} \end{tabular}}%
    \psfrag{s11}[][]{\color[rgb]{0,0,0}\setlength{\tabcolsep}{0pt}\begin{tabular}{c} \end{tabular}}%
    \psfrag{s12}[t][t][\mlab]{\color[rgb]{0,0,0}\setlength{\tabcolsep}{0pt}\begin{tabular}{c}Number of message iterations\end{tabular}}%
    \psfrag{s13}[b][b][\mlab]{\color[rgb]{0,0,0}\setlength{\tabcolsep}{0pt}\begin{tabular}{c}RMSE on skew [ppm]\end{tabular}}%
	\psfrag{x01}[t][t]{0}%
	\psfrag{x02}[t][t]{0.1}%
	\psfrag{x03}[t][t]{0.2}%
	\psfrag{x04}[t][t]{0.3}%
	\psfrag{x05}[t][t]{0.4}%
	\psfrag{x06}[t][t]{0.5}%
	\psfrag{x07}[t][t]{0.6}%
	\psfrag{x08}[t][t]{0.7}%
	\psfrag{x09}[t][t]{0.8}%
	\psfrag{x10}[t][t]{0.9}%
	\psfrag{x11}[t][t]{1}%
    \psfrag{x28}[t][t][\slab]{0}%
    \psfrag{x29}[t][t][\slab]{1}%
    \psfrag{x30}[t][t][\slab]{2}%
    \psfrag{x31}[t][t][\slab]{3}%
    \psfrag{x32}[t][t][\slab]{4}%
    \psfrag{x33}[t][t][\slab]{5}%
    \psfrag{x34}[t][t][\slab]{6}%
    \psfrag{x35}[t][t][\slab]{7}%
    \psfrag{x36}[t][t][\slab]{8}%
    \psfrag{x37}[t][t][\slab]{9}%
    \psfrag{x15}[t][t][\slab]{7}%
    \psfrag{x16}[t][t][\slab]{8}%
    \psfrag{x17}[t][t][\slab]{9}%
    \psfrag{x18}[t][t][\slab]{0}%
    \psfrag{x19}[t][t][\slab]{1}%
    \psfrag{x20}[t][t][\slab]{2}%
    \psfrag{x21}[t][t][\slab]{3}%
    \psfrag{x22}[t][t][\slab]{4}%
    \psfrag{x23}[t][t][\slab]{5}%
    \psfrag{x24}[t][t][\slab]{6}%
    \psfrag{x25}[t][t][\slab]{7}%
    \psfrag{x26}[t][t][\slab]{8}%
    \psfrag{x27}[t][t][\slab]{9}%
    \psfrag{x12}[t][t][\slab]{7}%
    \psfrag{x13}[t][t][\slab]{8}%
    \psfrag{x14}[t][t][\slab]{9}%
	\psfrag{v01}[r][r]{0}%
	\psfrag{v02}[r][r]{0.1}%
	\psfrag{v03}[r][r]{0.2}%
	\psfrag{v04}[r][r]{0.3}%
	\psfrag{v05}[r][r]{0.4}%
	\psfrag{v06}[r][r]{0.5}%
	\psfrag{v07}[r][r]{0.6}%
	\psfrag{v08}[r][r]{0.7}%
	\psfrag{v09}[r][r]{0.8}%
	\psfrag{v10}[r][r]{0.9}%
	\psfrag{v11}[r][r]{1}%
    \psfrag{v12}[r][r][\slab]{$0.05$}%
    \psfrag{v13}[r][r][\slab]{$0.1$}%
    \psfrag{v14}[r][r][\slab]{$0.15$}%
    \psfrag{v15}[r][r][\slab]{$0.2$}%
    \psfrag{v16}[r][r][\slab]{$0.25 \, 10^{-4}$}%
    \psfrag{v17}[r][r][\slab]{$0.5\, 10^{-4}$}%
    \psfrag{v18}[r][r][\slab]{$0.74\, 10^{-4}$}%
    \psfrag{v19}[r][r][\slab]{$10^{-4}$}%
    \psfrag{v20}[r][r][\slab]{$10^{-2}$}%
    \psfrag{v21}[r][r][\slab]{$1$}%
    \psfrag{v22}[r][r][\slab]{$10^{2}$}%
    \psfrag{v23}[r][r][\slab]{$10^{-5}$}%
    \psfrag{v24}[r][r][\slab]{$10^{-2}$}%
    \psfrag{v25}[r][r][\slab]{$10$}%
    \psfrag{v26}[r][r][\slab]{$10^{4}$}%
    \psfrag{ypower3}[Bl][Bl]{}%
    \includegraphics[width=0.9\columnwidth]{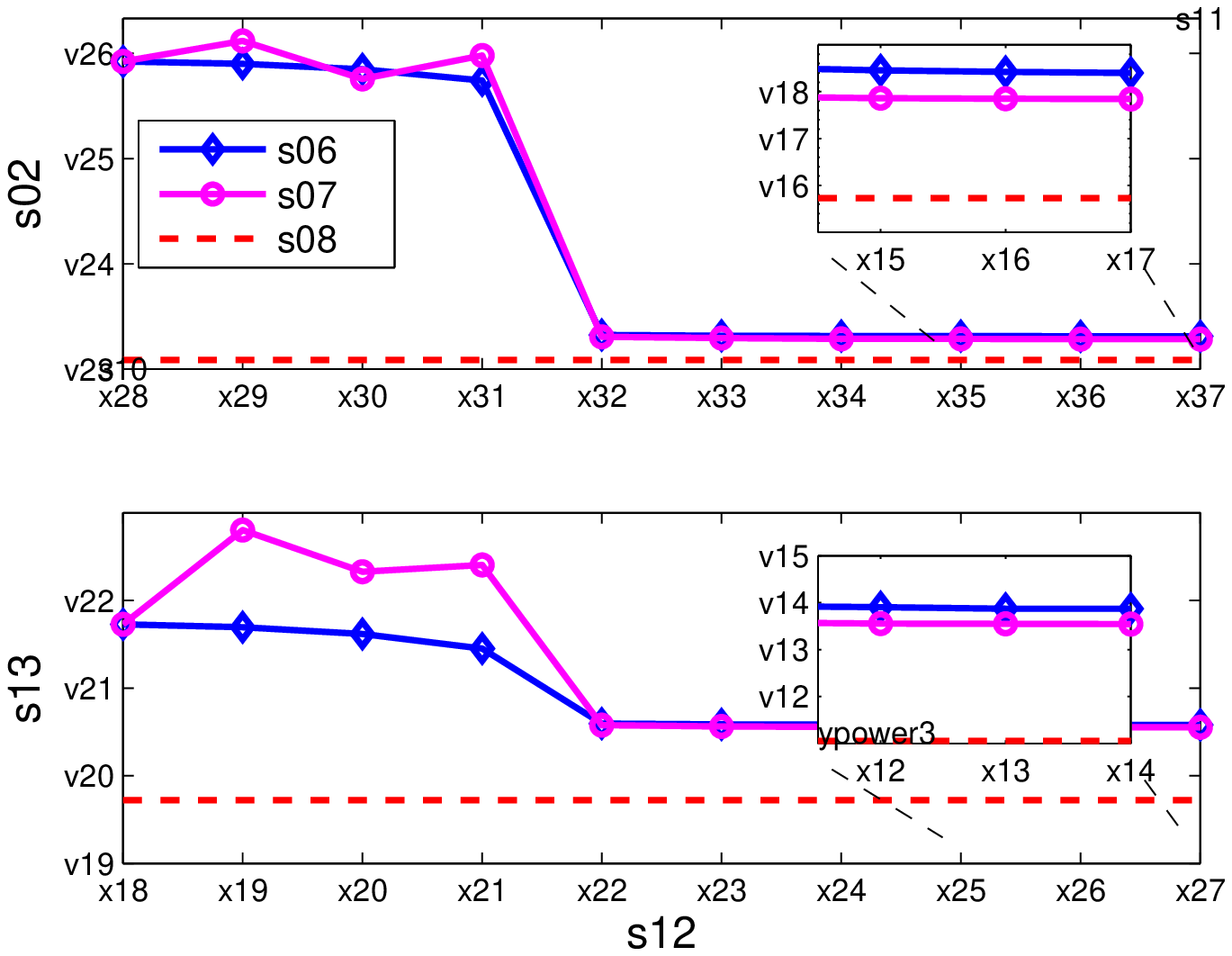}
    \caption{\label{fig:var_Iter}Convergence of parameter estimates with $K_{ij} = K_{ji} = 20$.}
    \end{psfrags}
  \end{centering}
  
\end{figure}

In Fig.~\ref{fig:var_Iter}, we show the BP and MF mean square error
of the phase and skew estimates as a function of the iteration index.
For the topology in Fig.~\ref{fig:network}, we observed that 
both algorithms converge after a number of iterations
that correspond to the largest multi-hop distance of a node to a master
node.%
\footnote{For the topology depicted in Fig.~\ref{fig:network}, which is part
of the randomly generated topologies, the largest multi-hop distance
is 4. This observation corresponds with the results shown by the simulation
in Fig.~\ref{fig:var_Iter}.%
} Furthermore, both algorithms converge to the same values. 
The gap
between estimation accuracy and BCRB 
arises due to 
the prior uncertainty of the clock phases.

\begin{figure}
\centering{
  \begin{psfrags}
    \psfrag{s13}[b][b][\mlab]{\color[rgb]{0,0,0}\setlength{\tabcolsep}{0pt}\begin{tabular}{c}RMSE on skew [ppm]\end{tabular}}%
    \psfrag{s05}[l][l]{\color[rgb]{0,0,0}asdddd}%
    \psfrag{s06}[l][l][\slab]{\color[rgb]{0,0,0}MF - s}%
    \psfrag{s07}[l][l][\slab]{\color[rgb]{0,0,0}MF - p}%
    \psfrag{s08}[l][l][\slab]{\color[rgb]{0,0,0}BP}%
    \psfrag{s10}[][]{\color[rgb]{0,0,0}\setlength{\tabcolsep}{0pt}\begin{tabular}{c} \end{tabular}}%
    \psfrag{s11}[][]{\color[rgb]{0,0,0}\setlength{\tabcolsep}{0pt}\begin{tabular}{c} \end{tabular}}%
    \psfrag{s12}[t][t][\mlab]{\color[rgb]{0,0,0}\setlength{\tabcolsep}{0pt}\begin{tabular}{c}Number of message iterations\end{tabular}}%
    \psfrag{s02}[b][b][\mlab]{\color[rgb]{0,0,0}\setlength{\tabcolsep}{0pt}\begin{tabular}{c}RMSE on phase [ms]\end{tabular}}%
    %
	\psfrag{x01}[t][t]{0}%
	\psfrag{x02}[t][t]{0.1}%
	\psfrag{x03}[t][t]{0.2}%
	\psfrag{x04}[t][t]{0.3}%
	\psfrag{x05}[t][t]{0.4}%
	\psfrag{x06}[t][t]{0.5}%
	\psfrag{x07}[t][t]{0.6}%
	\psfrag{x08}[t][t]{0.7}%
	\psfrag{x09}[t][t]{0.8}%
	\psfrag{x10}[t][t]{0.9}%
	\psfrag{x11}[t][t]{1}%
    \psfrag{x12}[t][t][\slab]{0}%
    \psfrag{x13}[t][t][\slab]{20}%
    \psfrag{x14}[t][t][\slab]{40}%
    \psfrag{x15}[t][t][\slab]{60}%
    \psfrag{x16}[t][t][\slab]{0}%
    \psfrag{x17}[t][t][\slab]{20}%
    \psfrag{x18}[t][t][\slab]{40}%
    \psfrag{x19}[t][t][\slab]{60}%
    \psfrag{x20}[t][t][\slab]{0}%
    \psfrag{x21}[t][t][\slab]{100}%
    \psfrag{x22}[t][t][\slab]{200}%
    \psfrag{x23}[t][t][\slab]{300}%
    \psfrag{x24}[t][t][\slab]{400}%
    \psfrag{x25}[t][t][\slab]{500}%
    \psfrag{x26}[t][t][\slab]{600}%
    \psfrag{x27}[t][t][\slab]{700}%
    \psfrag{x28}[t][t][\slab]{800}%
    \psfrag{x29}[t][t][\slab]{900}%
    \psfrag{x30}[t][t][\slab]{0}%
    \psfrag{x31}[t][t][\slab]{100}%
    \psfrag{x32}[t][t][\slab]{200}%
    \psfrag{x33}[t][t][\slab]{300}%
    \psfrag{x34}[t][t][\slab]{400}%
    \psfrag{x35}[t][t][\slab]{500}%
    \psfrag{x36}[t][t][\slab]{600}%
    \psfrag{x37}[t][t][\slab]{700}%
    \psfrag{x38}[t][t][\slab]{800}%
    \psfrag{x39}[t][t][\slab]{900}%
    %
	\psfrag{v01}[r][r]{0}%
	\psfrag{v02}[r][r]{0.1}%
	\psfrag{v03}[r][r]{0.2}%
	\psfrag{v04}[r][r]{0.3}%
	\psfrag{v05}[r][r]{0.4}%
	\psfrag{v06}[r][r]{0.5}%
	\psfrag{v07}[r][r]{0.6}%
	\psfrag{v08}[r][r]{0.7}%
	\psfrag{v09}[r][r]{0.8}%
	\psfrag{v10}[r][r]{0.9}%
	\psfrag{v11}[r][r]{1}%
    \psfrag{v12}[t][t][\slab]{$10^{-4}\,$ }%
    \psfrag{v13}[t][t][\slab]{$10^{-2}\,$ }%
    \psfrag{v14}[t][t][\slab]{$1$}%
    \psfrag{v15}[t][t][\slab]{$10^{-2}\,$ }%
    \psfrag{v16}[t][t][\slab]{$1$}%
    \psfrag{v17}[t][t][\slab]{$10^{2}$}%
    \psfrag{v18}[t][t][\slab]{$10^{4}$}%
    \psfrag{v19}[t][t][\slab]{$10^{-4}\,$ }%
    \psfrag{v20}[t][t][\slab]{$10^{-2}\,$ }%
    \psfrag{v21}[t][t][\slab]{$1$}%
    \psfrag{v22}[t][t][\slab]{$10^{2}$}%
    \psfrag{v23}[t][t][\slab]{$10^{4}$}%
    \psfrag{v24}[t][t][\slab]{$10^{-2}$ }%
    \psfrag{v25}[t][t][\slab]{$1$}%
    \psfrag{v26}[t][t][\slab]{$10^{2}$}%
    \psfrag{v27}[t][t][\slab]{$10^{4}$}%
    %
    \includegraphics[width=0.9\columnwidth]{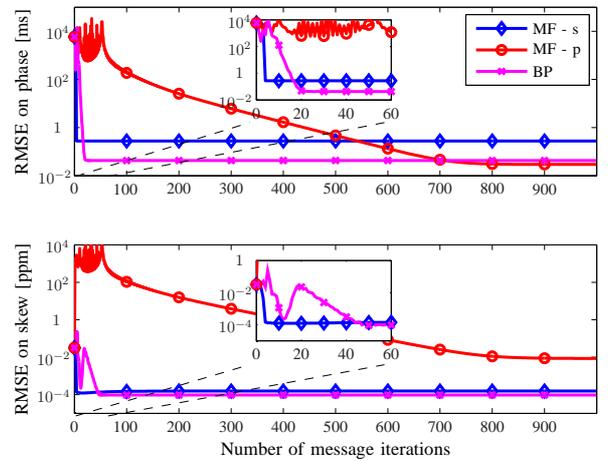}
    \caption{Convergence of clock parameters without MN, 
	every node pair with $K_{ij} = K_{ji} = 20$.}
    \label{fig:noMaster}
  \end{psfrags}%
  }
\end{figure}

\highlight{
As indicated by the proof of convergence, MF and BP do not require a MN for convergence if prior information with finite variances is available on all parameters. The convergence without MN is depicted in Fig.~\ref{fig:noMaster}, where BP uses a parallel schedule
, and two schedules for MF are considered: ``MF - p'' is a parallel schedule, and ``MF - s'' is the serial schedule from Sec.~\ref{ssec:implemenationAspects}.
It can be observed that all algorithms converge, whereas ``MF - s'' has a significant higher convergence speed than ``MF - p''.
}
%

\subsubsection{Impact of Measurements}

\begin{figure}
\centering{
  \begin{psfrags}
    \psfrag{s01}[b][b][\mlab]{\color[rgb]{0,0,0}\setlength{\tabcolsep}{0pt}\begin{tabular}{c}RMSE on phase [ms]\end{tabular}}%
    \psfrag{s05}[l][l]{\color[rgb]{0,0,0}asddddd}%
    \psfrag{s06}[l][l][\slab]{\color[rgb]{0,0,0}MF}
    \psfrag{s07}[l][l][\slab]{\color[rgb]{0,0,0}BP}%
    \psfrag{s08}[l][l][\slab]{\color[rgb]{0,0,0}BCRB}%
    \psfrag{s10}[][]{\color[rgb]{0,0,0}\setlength{\tabcolsep}{0pt}\begin{tabular}{c} \end{tabular}}%
    \psfrag{s11}[][]{\color[rgb]{0,0,0}\setlength{\tabcolsep}{0pt}\begin{tabular}{c} \end{tabular}}%
    \psfrag{s12}[t][t][\mlab]{\color[rgb]{0,0,0}\setlength{\tabcolsep}{0pt}\begin{tabular}{c}Number of measurements\end{tabular}}%
    \psfrag{s13}[b][b][\mlab]{\color[rgb]{0,0,0}\setlength{\tabcolsep}{0pt}\begin{tabular}{c}RMSE on skew [ppm]\end{tabular}}%
    %
    \psfrag{x01}[t][t][\slab]{$1$}%
    \psfrag{x02}[t][t][\slab]{$10$}%
    \psfrag{x03}[t][t][\slab]{$100$}%
    \psfrag{x04}[t][t][\slab]{$1000$}%
    \psfrag{x05}[t][t][\slab]{$1$}%
    \psfrag{x06}[t][t][\slab]{$10$}%
    \psfrag{x07}[t][t][\slab]{$100$}%
    \psfrag{x08}[t][t][\slab]{$1000$}%
    %
    \psfrag{v07}[r][r][\slab]{$10^{-5}$}%
    \psfrag{v08}[r][r][\slab]{$10^{-4}$}%
    \psfrag{v09}[r][r][\slab]{$10^{-3}$}%
    \psfrag{v10}[r][r][\slab]{$10^{-2}$}%
    \psfrag{v01}[r][r][\slab]{$10^{-4}$}%
    \psfrag{v02}[r][r][\slab]{$10^{-3}$}%
    \psfrag{v03}[r][r][\slab]{$10^{-2}$}%
    \psfrag{v04}[r][r][\slab]{$0.1$}%
    \psfrag{v05}[r][r][\slab]{$1$}%
    \psfrag{v06}[r][r][\slab]{$10$}%
    %
    \includegraphics[width=0.9\columnwidth]{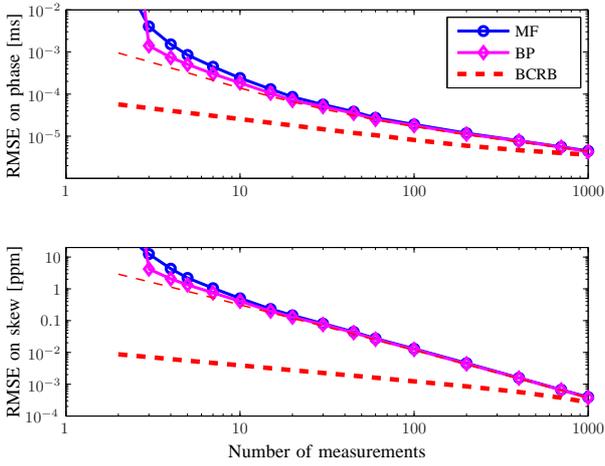}%
    \caption{RMSE and BCRB versus number of measurements, with BCRB for different phase distributions:
    bold with $[-10,10]\,$s, thin with $[-0.01,0.01]\,$s.}
    \label{fig:var_N_time}
  \end{psfrags}
}
\end{figure}


\begin{figure}
\centering{
  \begin{psfrags}
    \psfrag{s01}[b][b][\mlab]{\color[rgb]{0,0,0}\setlength{\tabcolsep}{0pt}\begin{tabular}{c}RMSE on phase [ms]\end{tabular}}%
    \psfrag{s05}[l][l][\slab]{\color[rgb]{0,0,0}asddddd}%
    \psfrag{s06}[l][l][\slab]{\color[rgb]{0,0,0}MF}%
    \psfrag{s07}[l][l][\slab]{\color[rgb]{0,0,0}BP}%
    \psfrag{s08}[l][l][\slab]{\color[rgb]{0,0,0}BCRB}%
    \psfrag{s10}[][]{\color[rgb]{0,0,0}\setlength{\tabcolsep}{0pt}\begin{tabular}{c} \end{tabular}}%
    \psfrag{s11}[][]{\color[rgb]{0,0,0}\setlength{\tabcolsep}{0pt}\begin{tabular}{c} \end{tabular}}%
    \psfrag{s12}[t][t][\mlab]{\color[rgb]{0,0,0}\setlength{\tabcolsep}{0pt}\begin{tabular}{c}Noise standard deviation [s]\end{tabular}}%
    \psfrag{s13}[b][b][\mlab]{\color[rgb]{0,0,0}\setlength{\tabcolsep}{0pt}\begin{tabular}{c}RMSE on skew [ppm]\end{tabular}}%
    %
    \psfrag{x01}[t][t][\slab]{$10^{-10}$}%
    \psfrag{x02}[t][t][\slab]{$10^{-9}$}%
    \psfrag{x03}[t][t][\slab]{$10^{-8}$}%
    \psfrag{x04}[t][t][\slab]{$10^{-7}$}%
    \psfrag{x05}[t][t][\slab]{$10^{-6}$}%
    \psfrag{x06}[t][t][\slab]{$10^{-5}$}%
    \psfrag{x07}[t][t][\slab]{$10^{-4}$}%
    \psfrag{x08}[t][t][\slab]{$10^{-10}$}%
    \psfrag{x09}[t][t][\slab]{$10^{-9}$}%
    \psfrag{x10}[t][t][\slab]{$10^{-8}$}%
    \psfrag{x11}[t][t][\slab]{$10^{-7}$}%
    \psfrag{x12}[t][t][\slab]{$10^{-6}$}%
    \psfrag{x13}[t][t][\slab]{$10^{-5}$}%
    \psfrag{x14}[t][t][\slab]{$10^{-4}$}%
    %
    \psfrag{v01}[r][r][\slab]{$10^{-4}$}%
    \psfrag{v02}[r][r][\slab]{$10^{-2}$}%
    \psfrag{v03}[r][r][\slab]{$1$}%
    \psfrag{v04}[r][r][\slab]{$10^{2}$}%
    \psfrag{v05}[r][r][\slab]{$10^{-8}$}%
    \psfrag{v06}[r][r][\slab]{$10^{-6}$}%
    \psfrag{v07}[r][r][\slab]{$10^{-4}$}%
    \psfrag{v08}[r][r][\slab]{$10^{-2}$}%
    %
    \includegraphics[width=0.9\columnwidth]{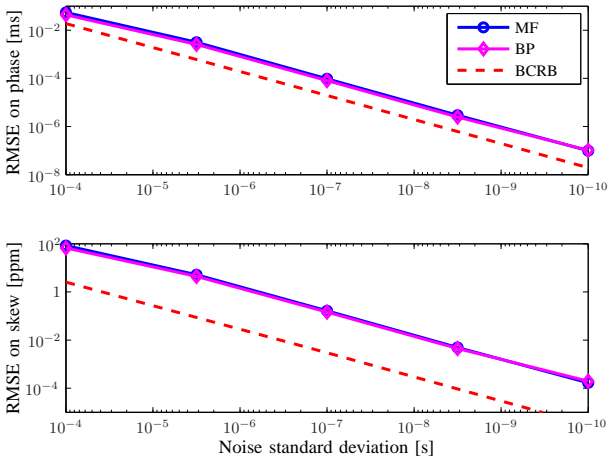}
    \caption{Variation of measurement noise between
	every node pair with $K_{ij} = K_{ji} = 20$.}
    \label{fig:var_sigma_time}
  \end{psfrags}%
  }
\end{figure}

The impact of the number of measurements after 7 message passing iterations is depicted in Fig.~\ref{fig:var_N_time},
where the estimation results are compared to the BCRB. With the number of measurements
the estimation accuracy increases and the gap to the BCRB is reduced. 
\highlight{Furthermore, we can observe that the prior uncertainty on the clock phases has a significant impact on the BCRB, but not on the MF or BP performance. The variation, which also explains the gap in Fig.~\ref{fig:var_Iter}, originates from the second order dependencies of the clock phases in the Fisher information matrix (see \eqref{eq:FIM_LH} and App.~\ref{ap:FIM}). It can be seen that for small phase intervals or for a large number of packets, the RMSE approaches the BCRB.}\\
Varying the noise variance $\sigma^2_{w}$ in Fig.~\ref{fig:var_sigma_time} reveals
its linear dependence to the estimation accuracy in double logarithmic
scale. Both plots can be used as design criteria for the synchronization
system.

\subsubsection{Scaling behavior}
\highlight{
To analyze the scaling behavior of the proposed methods, we used a grid network with equally spaced nodes in $x$ and $y$. Nodes are connected only to the next nodes in $x$ and $y$, and a MN was set to the one corner. In Fig.~\ref{fig:var_scale} it can be observed that the estimation accuracy decreases with increasing hop distance to the MN. This can be explained by the successive noise processes which are introduced in the connections.
}
\begin{figure}
  \centering
  \begin{psfrags}%
  \psfrag{s01}[b][b][\mlab]{\color[rgb]{0,0,0}\setlength{\tabcolsep}{0pt}\begin{tabular}{c}RMSE on phase [ms]\end{tabular}}%
  \psfrag{s05}[l][l][\slab]{\color[rgb]{0,0,0}BCRB...}%
  \psfrag{s06}[l][l][\slab]{\color[rgb]{0,0,0}MF}%
  \psfrag{s07}[l][l][\slab]{\color[rgb]{0,0,0}BP}%
  \psfrag{s08}[l][l][\slab]{\color[rgb]{0,0,0}BCRB}%
  \psfrag{s10}[][][\slab]{\color[rgb]{0,0,0}\setlength{\tabcolsep}{0pt}\begin{tabular}{c} \end{tabular}}%
  \psfrag{s11}[][][\slab]{\color[rgb]{0,0,0}\setlength{\tabcolsep}{0pt}\begin{tabular}{c} \end{tabular}}%
  \psfrag{s12}[t][t][\slab]{\color[rgb]{0,0,0}\setlength{\tabcolsep}{0pt}\begin{tabular}{c}Number of nodes\end{tabular}}%
  \psfrag{s13}[b][b][\slab]{\color[rgb]{0,0,0}\setlength{\tabcolsep}{0pt}\begin{tabular}{c}RMSE on skew [ppm]\end{tabular}}%
  \psfrag{x01}[t][t][\slab]{0}%
  \psfrag{x02}[t][t][\slab]{20}%
  \psfrag{x03}[t][t][\slab]{40}%
  \psfrag{x04}[t][t][\slab]{60}%
  \psfrag{x05}[t][t][\slab]{80}%
  \psfrag{x06}[t][t][\slab]{100}%
  \psfrag{x07}[t][t][\slab]{120}%
  \psfrag{x08}[t][t][\slab]{140}%
  \psfrag{x09}[t][t][\slab]{160}%
  \psfrag{x10}[t][t][\slab]{180}%
  \psfrag{x11}[t][t][\slab]{200}%
  \psfrag{x12}[t][t][\slab]{0}%
  \psfrag{x13}[t][t][\slab]{20}%
  \psfrag{x14}[t][t][\slab]{40}%
  \psfrag{x15}[t][t][\slab]{60}%
  \psfrag{x16}[t][t][\slab]{80}%
  \psfrag{x17}[t][t][\slab]{100}%
  \psfrag{x18}[t][t][\slab]{120}%
  \psfrag{x19}[t][t][\slab]{140}%
  \psfrag{x20}[t][t][\slab]{160}%
  \psfrag{x21}[t][t][\slab]{180}%
  \psfrag{x22}[t][t][\slab]{200}%
  \psfrag{v01}[r][r][\slab]{$10^{-3}$}%
  \psfrag{v02}[r][r][\slab]{$10^{-2}$}%
  \psfrag{v03}[r][r][\slab]{$10^{-1}$}%
  \psfrag{v04}[r][r][\slab]{$10^{0}$}%
  \psfrag{v05}[r][r][\slab]{$10^{-5}$}%
  \psfrag{v06}[r][r][\slab]{$10^{-4}$}%
  \psfrag{v07}[r][r][\slab]{$10^{-3}$}%
  %
  \includegraphics[width=0.9\columnwidth]{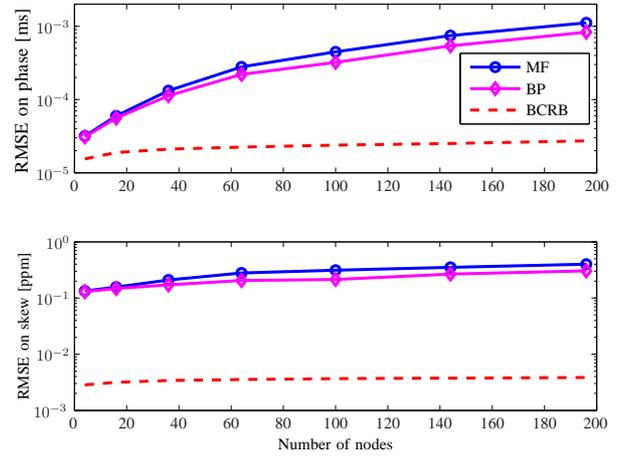}
    \caption{Algorithm performance after convergence in grid networks with increasing number of nodes, using $K_{ij} = K_{ji} = 20$.}
    \label{fig:var_scale}
  \end{psfrags}%
\end{figure}

\subsection{Comparison to other Algorithms}
\label{ssec:num_comp_methods}
We now compare BP and MF to other fully distributed state-of-the-art
algorithms\footnote{The following algorithm parameters are selected for \cite{schenato}: filter values $\rho_\eta = \rho_\alpha = \rho_o = 0.6$; for \cite{zennaro}: step size $\epsilon_\text{opt}$ according to \cite[Eq. (12)]{zennaro}, relative skew estimation as in \cite{noh} with $K_{ij} = K_{ji} = 5$; and for \cite{solis}: filter parameter $\lambda = 0.9$.}
 from Section \ref{sec:stateoftheart}. 
For a fair comparison, we set the computational delay
$T_{c}=0\,$s, since not all methods account for deterministic delays
between the nodes. Thus, the delay reduces to the time of flight, 
which is in the order of tenths of microseconds. 

\begin{figure}
\centering{
  \begin{psfrags}
    \psfrag{s02}[b][b][\mlab]{\color[rgb]{0,0,0}\setlength{\tabcolsep}{0pt}\begin{tabular}{c}RMSE on Phase [ms]\end{tabular}}%
    \psfrag{s05}[l][l][\slab]{\color[rgb]{0,0,0}LC [18]}%
    \psfrag{s06}[l][l][\slab]{\color[rgb]{0,0,0}MF}%
    \psfrag{s07}[l][l][\slab]{\color[rgb]{0,0,0}BP}%
    \psfrag{s08}[l][l][\slab]{\color[rgb]{0,0,0}ATS \cite{schenato}}%
    \psfrag{s09}[l][l][\slab]{\color[rgb]{0,0,0}ADMM \cite{zennaro}}%
    \psfrag{s10}[l][l][\slab]{\color[rgb]{0,0,0}LC \cite{solis}}%
    \psfrag{s12}[][]{\color[rgb]{0,0,0}\setlength{\tabcolsep}{0pt}\begin{tabular}{c} \end{tabular}}%
    \psfrag{s13}[][]{\color[rgb]{0,0,0}\setlength{\tabcolsep}{0pt}\begin{tabular}{c} \end{tabular}}%
    \psfrag{s14}[t][t][\mlab]{\color[rgb]{0,0,0}\setlength{\tabcolsep}{0pt}\begin{tabular}{c}Number of broadcasts\end{tabular}}%
    \psfrag{s15}[b][b][\mlab]{\color[rgb]{0,0,0}\setlength{\tabcolsep}{0pt}\begin{tabular}{c}RMSE on skew [ppm]\end{tabular}}%
    %
    \psfrag{x01}[t][t][\slab]{0}%
    \psfrag{x02}[t][t][\slab]{100}%
    \psfrag{x03}[t][t][\slab]{200}%
    \psfrag{x04}[t][t][\slab]{300}%
    \psfrag{x05}[t][t][\slab]{400}%
    \psfrag{x06}[t][t][\slab]{500}%
    \psfrag{x07}[t][t][\slab]{600}%
    \psfrag{x08}[t][t][\slab]{0}%
    \psfrag{x09}[t][t][\slab]{100}%
    \psfrag{x10}[t][t][\slab]{200}%
    \psfrag{x11}[t][t][\slab]{300}%
    \psfrag{x12}[t][t][\slab]{400}%
    \psfrag{x13}[t][t][\slab]{500}%
    \psfrag{x14}[t][t][\slab]{600}%
    %
    \psfrag{v01}[r][r][\slab]{$10^{2}$}%
    \psfrag{v02}[r][r][\slab]{$10^{-2}$}%
    \psfrag{v03}[r][r][\slab]{$1$}%
    \psfrag{v04}[r][r][\slab]{$10^{-5}$}%
    \psfrag{v05}[r][r][\slab]{$10^{-2}$}%
    \psfrag{v06}[r][r][\slab]{$10$}%
    \psfrag{v07}[r][r][\slab]{$10^{4}$}%
        \includegraphics[width=0.9\columnwidth]{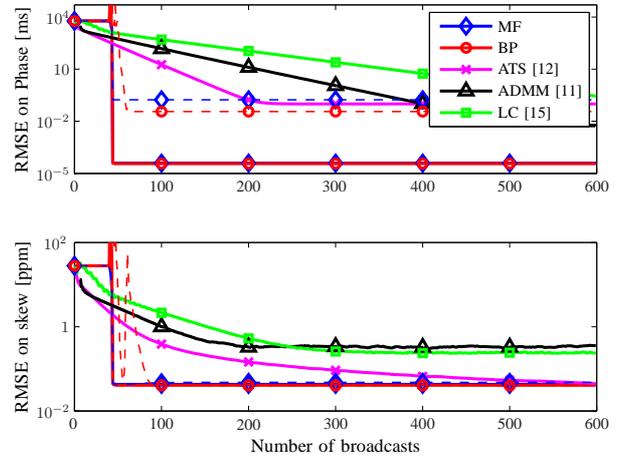}
	\caption{Variance of clock parameters using selected synchronization algorithms.}
	\label{fig:compare_methods}
  \end{psfrags}
  }
\end{figure}

In Fig.~\ref{fig:compare_methods},
simulation results for phase and skew estimation are shown. The
simulations were performed on 
randomly created topologies as depicted
in Fig.~\ref{fig:network} and the results are averaged over 100 runs.
\highlight{The message passing algorithms use 40 measurements. The proposed MF and BP algorithms are evaluated with MN (solid line) and without MN (dashed line, with  decreased phase estimation accuracy). Using more measurements, the accuracy can be increased whereas more packet broadcasts are required. For the given setting, it can be observed that the proposed algorithms converge after around 100 message broadcasts, which is significantly lower than that of the competing methods. Moreover, the estimation accuracy of the network with MN is superior to those methods.}

\begin{figure}
\centering{
  \begin{psfrags}%
    \psfrag{s01}[t][t][\mlab]{\color[rgb]{0,0,0}\setlength{\tabcolsep}{0pt}\begin{tabular}{c}Number of broadcasts\end{tabular}}%
    \psfrag{s02}[b][b][\mlab]{\color[rgb]{0,0,0}\setlength{\tabcolsep}{0pt}\begin{tabular}{c}Computations\end{tabular}}%
    \psfrag{s05}[l][l][\slab]{\color[rgb]{0,0,0}LC [18]}%
    \psfrag{s06}[l][l][\slab]{\color[rgb]{0,0,0}MF}%
    \psfrag{s07}[l][l][\slab]{\color[rgb]{0,0,0}BP}%
    \psfrag{s08}[l][l][\slab]{\color[rgb]{0,0,0}ATS \cite{schenato}}%
    \psfrag{s09}[l][l][\slab]{\color[rgb]{0,0,0}ADMM \cite{zennaro}}%
    \psfrag{s10}[l][l][\slab]{\color[rgb]{0,0,0}LC \cite{solis}}%
    \psfrag{s12}[][][\slab]{\color[rgb]{0,0,0}\setlength{\tabcolsep}{0pt}\begin{tabular}{c} \end{tabular}}%
    \psfrag{s13}[][][\slab]{\color[rgb]{0,0,0}\setlength{\tabcolsep}{0pt}\begin{tabular}{c} \end{tabular}}%
    %
	  \psfrag{x01}[t][t]{0}%
	  \psfrag{x02}[t][t]{0.1}%
	  \psfrag{x03}[t][t]{0.2}%
	  \psfrag{x04}[t][t]{0.3}%
	  \psfrag{x05}[t][t]{0.4}%
	  \psfrag{x06}[t][t]{0.5}%
	  \psfrag{x07}[t][t]{0.6}%
	  \psfrag{x08}[t][t]{0.7}%
	  \psfrag{x09}[t][t]{0.8}%
	  \psfrag{x10}[t][t]{0.9}%
	  \psfrag{x11}[t][t]{1}%
    \psfrag{x12}[t][t][\slab]{0}%
    \psfrag{x13}[t][t][\slab]{100}%
    \psfrag{x14}[t][t][\slab]{200}%
    \psfrag{x15}[t][t][\slab]{300}%
    \psfrag{x16}[t][t][\slab]{400}%
    \psfrag{x17}[t][t][\slab]{500}%
    \psfrag{x18}[t][t][\slab]{600}%
    %
	  \psfrag{v01}[r][r]{0}%
	  \psfrag{v02}[r][r]{0.2}%
	  \psfrag{v03}[r][r]{0.4}%
	  \psfrag{v04}[r][r]{0.6}%
	  \psfrag{v05}[r][r]{0.8}%
	  \psfrag{v06}[r][r]{1}%
    \psfrag{v07}[r][r][\slab]{0}%
    \psfrag{v08}[r][r][\slab]{1}%
    \psfrag{v09}[r][r][\slab]{2}%
    \psfrag{v10}[r][r][\slab]{3}%
    \psfrag{v11}[r][r][\slab]{4}%
    \psfrag{v12}[r][r][\slab]{5}%
    \psfrag{ypower3}[Bl][Bl][\slab]{$\times 10^{4}$}%
    %
    \includegraphics[width=0.9\columnwidth]{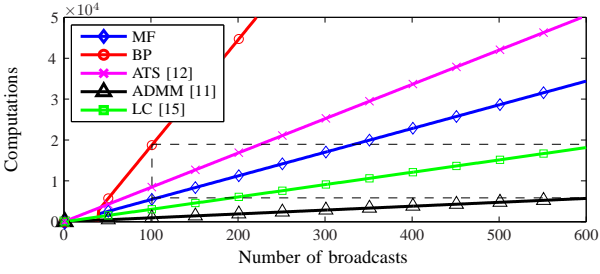}%
    \caption{Number of computations per node for selected synchronization algorithms.}
	\label{fig:compare_methods_comp}
    \end{psfrags}%
    }
\end{figure}

\highlight{In Fig.~\ref{fig:compare_methods_comp} the number of computations vs. broadcasts is depicted for a single node $i \in \mathcal{A}$ with $|\mathcal{T}_i| = 4$. 
We compare the complexity of the algorithms after convergence. MF and BP converge after 100 broadcasts (worst case for BP without MN), for ADMM after 400 broadcasts, and for LC and ATS after 600 iterations. 
It can be seen that ADMM has the lowest complexity, followed by MF, LC, BP, and finally ATS. However, ADMM uses PLL estimates and for optimized convergence, a centrally computed topology dependent step size. If no PLL estimates are accessible, estimates obtained with \cite{noh} would increase complexity and decrease convergence speed due to additional packet exchanges, which would rank ADMM after LC.
Thus, the simulation results indicate that MF has superior convergence speed and increased estimation accuracy while having lower computational requirements.}


\section{Conclusions}

\label{sec:Conclusion} In this paper, we presented two cooperative and fully distributed
network synchronization algorithms, which can be utilized when the
measurement noise is (approximately) Gaussian. Using standard
communication hardware, this approximation was verified by measurements.
The synchronization algorithm design is based on message passing in
a factor graph representation of the statistical model. Belief propagation
(BP) and mean field (MF) message passing were applied to perform MAP
estimation of the local clock parameters. We studied convergence,
convergence rate, and accuracy, and found that in all three criteria,
BP and MF are able to outperform existing algorithms. 
\highlight{Moreover, the MF method has significant advantages in computational complexity.
Both BP and MF 
can perform synchronization 
with and without a global time reference.}




\appendix

\subsection{ML-estimate of the delay}

\label{ap:ML_dist_estimate} The ML estimate of $\Delta_{ij}$ is
given by 
\[
\hat{\Delta}_{ij}=\arg\max_{\Delta_{ij}}\log p(\mathbf{c}_{ij}|\bm{\theta}_{i}, \bm{\theta}_{j};\Delta_{ij}),
\]
where 
\begin{align*}
 & \log p(\mathbf{c}_{ij}|\bm{\theta}_{i}, \bm{\theta}_{j};\Delta_{ij})\propto \\
 & \quad - \frac{\left\Vert \mathbf{c}_{i\to j}-\bm{\psi}_{i\to j}\right\Vert ^{2}}{2\alpha_{i}^{2}\sigma_{w}^{2}}
-\frac{\left\Vert \mathbf{c}_{j\to i}-\bm{\psi}_{j\to i})\right\Vert ^{2}}{2\alpha_{j}^{2}\sigma_{w}^{2}}.
\end{align*}
Since $\bm{\psi}_{i\to j}$ is linear
in $\Delta_{ij}$, taking the derivative of $\log p(\mathbf{c}_{ij}|\bm{\theta}_{i}, \bm{\theta}_{j};\Delta_{ij})$
with respect to $\Delta_{ij}$ and equating the result to zero,
immediately yields

\begin{align}
\hat{\Delta}_{ij}\left(\bm{\vartheta}_{i}, \bm{\vartheta}_{j}\right)= & \underbrace{\frac{K_{ij}\ \bar{c}_{i,ij}-K_{ji}\ \bar{c}_{i,ji}}{K_{ij}+K_{ji}}}_{a_{i}}\frac{1}{\alpha_{i}}\nonumber \\
 & +\underbrace{\frac{-K_{ij}\ \bar{c}_{j,ij}+K_{ji}\ \bar{c}_{j,ji}}{K_{ij}+K_{ji}}}_{a_{j}}\frac{1}{\alpha_{j}}\nonumber \\
 & +\underbrace{\frac{K_{ji}-K_{ij}}{K_{ij}+K_{ji}}}_{b_{ij}}\frac{\beta_{i}}{\alpha_{i}}+\underbrace{\frac{K_{ij}-K_{ji}}{K_{ij}+K_{ji}}}_{-b_{ij}}\frac{\beta_{j}}{\alpha_{j}},\label{eq:dist_est-1}
\end{align}
 with the averaged time stamps $\bar{c}_{i,ij}=1/K_{ij}\sum_{k}c_{i}(t_{ij,0}^{(k)})$,
$\bar{c}_{i,ji}=1/K_{ji}\sum c_{i}(t_{ji,1}^{(l)})$ of $i$, and
$\bar{c}_{j,ji}=1/K_{ji}\sum_{l}c_{j}(t_{ji,0}^{(l)})$, $\bar{c}_{j,ij}=1/K_{ij}\sum_{k}c_{j}(t_{ij,1}^{(k)})$
of $j$.

\subsection{Computation of Fischer Information Matrix}
\label{ap:FIM}

\subsubsection{Computation of $\mathbf{J}_{l}$}

From the true likelihood (\ref{eq:trueLH}) we have the parameter
set $\boldsymbol{\Sigma}_{l}^{-1},\boldsymbol{\mu}_{l}$ for every
connected node pair $(i,j)$ as 
\begin{align*}
\boldsymbol{\mu}_{l,ij} & =\begin{bmatrix} \bm{\psi}_{i\to j} \\ \bm{\psi}_{j\to i}\end{bmatrix}, &
\boldsymbol{\Sigma}_{l,ij} & =\begin{bmatrix}\alpha_{j}^{2}\sigma_{w}^{2}\mathbf{I}_{K_{ij}, K_{ij}} & \mathbf{0}\\
\mathbf{0} & \alpha_{i}^{2}\sigma_{w}^{2}\mathbf{I}_{K_{ji}, K_{ji}}
\end{bmatrix},
\end{align*}
where $\mathbf{I}_{k,l}$ denotes a $k\times l$ identity matrix. 
Applying \eqref{eq:FIM_LH} leads to the symmetric main diagonal
block entries 
\begin{align*}
  [\mathbf{J}_{l}]_{i,i}=& \,\frac{1}{\alpha_i^2 \sigma_{w}^{2}}\sum_{j\in\mathcal{T}_i} \left( \sum_{k=1}^{K_{ij}}\begin{bmatrix}(\tau_{ij,0}^{(k)})^{2} & \tau_{ij,0}^{(k)} \\
  \tau_{ij,0}^{(k)} & 1 
\end{bmatrix} \right.\\
 & \left. +\sum_{l=1}^{K_{ji}} \begin{bmatrix}(\tau_{ji,1}^{(l)})^{2} - 2\sigma_{w}^{2} & \tau_{ji,1}^{(l)} \\
 \tau_{ji,1}^{(l)} & 1
\end{bmatrix} \right).
\end{align*}

where $\tau_{ij,0}^{(k)} = (c_{i}(t_{ij,0}^{(k)})-\beta_i) / \alpha_i$ and $\tau_{ij,1}^{(k)} = ( c_{i}(t_{ij,0}^{(l)})- \beta_i)/ \alpha_i + \Delta_{ij}$ for any pair $(i,j) \in \mathcal{C}$. 
The off-diagonal block entries are 

\begin{align*}
 [\mathbf{J}_{l}]_{i,j}=&  -\frac{1}{\alpha_i \alpha_j \sigma_{w}^{2}}
  \left( \sum_{k=1}^{K_{ij}}\hspace{-0.1cm}\begin{bmatrix} \tau_{ij,0}^{(k)} \tau_{ij,1}^{(k)} & \tau_{ij,0}^{(k)}\\
 \tau_{ij,1}^{(k)} & 1
\end{bmatrix}+ \right. \\
 & \left. \sum_{l=1}^{K_{ji}}\hspace{-0.1cm}\begin{bmatrix} \tau_{ji,0}^{(l)} \tau_{ji,1}^{(l)} & \tau_{ji,0}^{(l)} \\
 \tau_{ji,1}^{(l)} & 1
\end{bmatrix}^{\mathrm{T}} \right)
\end{align*}

 for $j\in\mathcal{T}_i$ and $[\mathbf{J}_{l}]_{i,j}=\mathbf{0}$ else.

\subsubsection{Expectations of $\mathbb{E}_{{\boldsymbol{\theta}}}[\mathbf{J}_{l}]$
and $\mathbb{E}_{{\boldsymbol{\theta}}}[\mathbf{J}_{p}]$}

The expectations $\mathbb{E}_{{\boldsymbol{\theta}}}[\mathbf{J}_{l}]$
and $\mathbb{E}_{{\boldsymbol{\theta}}}[\mathbf{J}_{p}]$ have to
be taken over the inverse clock skews, i.e. $\mathbb{E}_{{\boldsymbol{\theta}}}[1/\alpha_{i}^{n}]$,
for $n$ up to 4. Since the clock skews are Gaussian distributed
and close to one, we use the approximations 
\begin{align*}
 & \mathbb{E}\left\{ {1}/{\alpha_{i}}\right\} \approx2-\mu_{\alpha,i}\qquad\mathbb{E}\left\{ {1}/{\alpha_{i}^{2}}\right\} \approx\sigma_{\alpha,i}^{2}+\mathbb{E}\left\{ {1}/{\alpha_{i}}\right\} ^{2}\\
 & \mathbb{E}\left\{ {1}/{\alpha_{i}^{3}}\right\} \approx\mathbb{E}\left\{ {1}/{\alpha_{i}}\right\} ^{3}+3\;\mathbb{E}\left\{ {1}/{\alpha_{i}}\right\} \sigma_{\alpha,i}^{2}\\
 & \mathbb{E}\left\{ {1}/{\alpha_{i}^{4}}\right\} \approx\mathbb{E}\left\{ {1}/{\alpha_{i}}\right\} ^{4}+6\;\mathbb{E}\left\{ {1}/{\alpha_{i}}\right\} ^{2}\sigma_{\alpha,i}^{2}+3\;\sigma_{\alpha,i}^{4}.
\end{align*}

\subsection{Belief Propagation Update Rules}

\label{ap:BP_update} Derivation of the message parameters for $m_{{p}_{ij}\rightarrow\bm{\vartheta}_{i}}(\bm{\theta}_{i})$
in \eqref{eq:BPmsg_thetai_in}: 
\begin{align*}
 & m_{{p}_{ij}\rightarrow\bm{\vartheta}_{i}}(\bm{\vartheta}_{i})=\int{p}_{ij}(\bm{\vartheta}_{i}, \bm{\vartheta}_{j})\ m_{\bm{\vartheta}_{j}\rightarrow{p}_{ij}}(\bm{\vartheta}_{j}) \, \mathrm{d}\bm{\theta}_{j}'\\
 & \ \propto\int\limits \exp\left(-\frac{1}{2\sigma_{w}^{2}}\|\mathbf{A}_{ij}\bm{\vartheta}_{i}+\mathbf{B}_{ij}\bm{\vartheta}_{j}\|^{2}\right) \times \\
  & \qquad \mathcal{N}_{\bm{\vartheta}_{j}}\left(\boldsymbol{\mu}_{\text{ext},ji},\boldsymbol{\Sigma}_{\text{ext},ji}\right) \, \mathrm{d}\bm{\theta}_{j}'\\
 & \ =\exp\left(\frac{G_{i}(\bm{\vartheta}_{i})}{2}\right)\int\exp\left(\frac{G_{ij}(\bm{\vartheta}_{i},\bm{\vartheta}_{j})}{2}\right)\, \mathrm{d}\bm{\theta}_{j}' \\
 & \ \propto\exp\left(\frac{G_{i}(\bm{\theta_{i}})}{2}\right)\\
 & \ \propto\mathcal{N}_{\bm{\vartheta}_{i}}\left(\boldsymbol{\mu}_{\text{in},ij},\boldsymbol{\Sigma}_{\text{in},ij}\right).
\end{align*}
 The functions $G_{i}(\bm{\vartheta}_{i})$ and $G_{ij}(\bm{\vartheta}_{i},\bm{\vartheta}_{j})$
are given by the exponent 
\begin{align*}
 & -\frac{1}{\sigma_{w}^{2}}\Big({\bm{\vartheta}_{j}}^{\mathrm{T}}\mathbf{B}_{ij}^{\mathrm{T}}\mathbf{B}_{ij}\bm{\vartheta}_{j}+2\,{\bm{\vartheta}_{i}}^{\mathrm{T}}\mathbf{A}_{ij}^{\mathrm{T}}\mathbf{B}_{ij}\bm{\vartheta}_{j}+{\bm{\vartheta}_{i}}^{\mathrm{T}}\mathbf{A}_{ij}^{\mathrm{T}}\mathbf{A}_{ij}\bm{\vartheta}_{i}\Big)\\
 & \quad-\Big(\bm{\vartheta}_{j}-\boldsymbol{\mu}_{\text{ext},ji}\Big)^{\mathrm{T}}\boldsymbol{\Sigma}_{\text{ext},ji}^{-1}\Big(\bm{\vartheta}_{j}-\boldsymbol{\mu}_{\text{ext},ji}\Big)\\
 & =G_{i}(\bm{\vartheta}_{i})+G_{ij}(\bm{\vartheta}_{i},\bm{\vartheta}_{j})
\end{align*}
 with 
\begin{align*}
&G_{ij} (\bm{\vartheta}_{i},\bm{\vartheta}_{j}) =-\Big(\bm{\vartheta}_{j}-\boldsymbol{\mu}'\Big)^{\mathrm{T}}{\boldsymbol{\Sigma}'}^{-1}\Big(\bm{\vartheta}_{j}-\boldsymbol{\mu}'\Big)\\
  & \qquad-\boldsymbol{\mu}_{\text{ext},ji}^{\mathrm{T}}\boldsymbol{\Sigma}_{\text{ext},ji}^{-1}\boldsymbol{\mu}_{\text{ext},ji}\\
&G_{i} (\bm{\vartheta}_{i}) =-\frac{1}{\sigma_{w}^{2}}{\bm{\vartheta}_{i}}^{\mathrm{T}}\Big(\underbrace{\mathbf{A}_{ij}^{\mathrm{T}}\mathbf{A}_{ij}-\mathbf{A}_{ij}^{T}\mathbf{B}_{ij}\frac{1}{\sigma_w^{2}}{\boldsymbol{\Sigma}'}\,\mathbf{B}_{ij}^{\mathrm{T}}\mathbf{A}_{ij}}_{\boldsymbol{\Sigma}_{\text{in},ij}^{-1}}\Big){\bm{\vartheta}_{i}}\\
 & \qquad -\frac{2}{\sigma_{w}^{2}}{\bm{\vartheta}_{i}}^{\mathrm{T}}\underbrace{\mathbf{A}_{ij}^{\mathrm{T}}\mathbf{B}_{ij} {\boldsymbol{\Sigma}'}\,\boldsymbol{\Sigma}_{\text{ext},ji}^{-1}\boldsymbol{\mu}_{\text{ext},ji}}_{\boldsymbol{\Sigma}_{\text{in},ij}^{-1}\boldsymbol{\mu}_{\text{ext},ji}}
\end{align*}
 and 
\begin{align*}
{\boldsymbol{\Sigma}'}^{-1} & =\frac{1}{\sigma_{w}^{2}}\left(\mathbf{B}_{ij}^{\mathrm{T}}\mathbf{B}_{ij}+\sigma_{w}^{2}\boldsymbol{\Sigma}_{\text{ext},ji}^{-1}\right)\\
\boldsymbol{\mu}' & ={\boldsymbol{\Sigma}'}\left(\boldsymbol{\Sigma}_{\text{ext},ji}^{-1}\boldsymbol{\mu}_{\text{ext},ji}-\frac{1}{\sigma_{w}^{2}}\mathbf{B}_{ij}^{\mathrm{T}}\mathbf{A}_{ij}{\bm{\vartheta}_{i}}\right).
\end{align*}
 If the neighbor $j$ is MN, i.e., $\boldsymbol{\Sigma}_{\text{ext},ji}^{-1}=\text{diag}(\infty,\infty)$,
the parameters reduce to 
\begin{align*}
\boldsymbol{\Sigma}_{\text{in},ij}^{-1} & =\mathbf{A}_{ij}^{\mathrm{T}}\mathbf{A}_{ij}\\
\boldsymbol{\Sigma}_{\text{in},ij}^{-1}\boldsymbol{\mu}_{\text{ext},ji} & =\mathbf{A}_{ij}^{\mathrm{T}}\mathbf{B}_{ij}\boldsymbol{\mu}_{\text{ext},ji}.
\end{align*}

For the parameters in \eqref{eq:BPmsg_pij_in} we utilized the generic
formula for multiplying Gaussian normal distributions: 
\begin{align}
\prod_{i}\mathcal{N}_{\mathbf{x}}(\bm{\mu}_{i},\bm{\Sigma}_{i}) & \propto\mathcal{N}_{\mathbf{x}}(\bm{\mu},\bm{\Sigma})\label{eq:ap_GaussMult}
\end{align}
 with $\bm{\Sigma}^{-1}=\sum_{i}\bm{\Sigma}_{i}^{-1}$ and $\bm{\Sigma}^{-1}\bm{\mu}=\sum_{i}\bm{\Sigma}_{i}^{-1}\bm{\mu}_{i}$.

\subsection{Mean Field Update Rules}

\label{ap:MF_update} Derivation of the message parameters for $m_{{p}_{ij}\rightarrow\bm{\vartheta}_{i}}(\bm{\theta}_{i})$
in \eqref{eq:MFmsg_thetai_in}: 
\begin{align*}
m_{{p}_{ij}\rightarrow\bm{\vartheta}_{i}} & (\bm{\vartheta}_{i})=\exp\left(\int\log\left({p}_{ij}(\bm{\vartheta}_{i}, \bm{\vartheta}_{j})\right)\ b_{j}(\bm{\vartheta}_{j})  \, \mathrm{d}\bm{\theta}_{j}' \right)\\
\propto & \exp\Bigg(-\frac{1}{\sigma_{w}^{2}}\int\|\mathbf{A}_{ij}\bm{\vartheta}_{i}+\mathbf{B}_{ij}\bm{\vartheta}_{j}\|^{2} \times \\
& \quad \mathcal{N}_{\bm{\vartheta}_{j}}\left(\boldsymbol{\mu}_{j},\boldsymbol{\Sigma}_{j}\right) \, \mathrm{d}\bm{\theta}_{j}' \Bigg)\\
\propto & \exp\left(-\frac{1}{\sigma_{w}^{2}}\left({\bm{\vartheta}_{i}}^{\mathrm{T}}\mathbf{A}_{ij}^{\mathrm{T}}\mathbf{A}_{ij}\bm{\vartheta}_{i}+{\bm{\vartheta}_{i}}^{\mathrm{T}}\mathbf{A}_{ij}^{\mathrm{T}}\mathbf{B}_{ij}\boldsymbol{\mu}_{j}\right)\right)\\
\propto & \mathcal{N}_{\bm{\vartheta}_{i}}\left(\boldsymbol{\mu}_{\text{in},ij},\boldsymbol{\Sigma}_{\text{in},ij}\right)
\end{align*}
 with the parameters \eqref{eq:MF_Var_thetai_in} and \eqref{eq:MF_Mean_thetai_in}.
The message parameters of \eqref{eq:MF_marg} are derived equivalently
to \eqref{eq:ap_GaussMult}.

\subsection{Convergence proofs}
\label{ap:Conv_proofs}

\begin{figure}
  \psfrag{p23c}[l][l][\slab]{\color[rgb]{0,0,0}\setlength{\tabcolsep}{0pt}\begin{tabular}{c}\hspace{-1.1mm}\vspace{0.5mm}{$\phi_{23}$}\end{tabular}}
  \psfrag{p24c}[l][l][\slab]{\color[rgb]{0,0,0}\setlength{\tabcolsep}{0pt}\begin{tabular}{c}\hspace{-1.1mm}\vspace{0.5mm}{$\phi_{24}$}\end{tabular}}
  \psfrag{p25c}[l][l][\slab]{\color[rgb]{0,0,0}\setlength{\tabcolsep}{0pt}\begin{tabular}{c}\hspace{-1.1mm}\vspace{0.5mm}{$\phi_{25}$}\end{tabular}}
  \psfrag{p34c}[l][l][\slab]{\color[rgb]{0,0,0}\setlength{\tabcolsep}{0pt}\begin{tabular}{c}\hspace{-1.1mm}\vspace{0.5mm}{$\phi_{34}$}\end{tabular}}
  \psfrag{p45c}[l][l][\slab]{\color[rgb]{0,0,0}\setlength{\tabcolsep}{0pt}\begin{tabular}{c}\hspace{-1.1mm}\vspace{0.5mm}{$\phi_{45}$}\end{tabular}}
  \psfrag{n1}[b][b][\mlab]{\color[rgb]{0,0,0}\setlength{\tabcolsep}{0pt}\begin{tabular}{c}\vspace{-1.2mm}\hspace{-0.5mm}{$\bm{\vartheta}_1 \ist$}\end{tabular}}
  \psfrag{n2}[b][b][\mlab]{\color[rgb]{0,0,0}\setlength{\tabcolsep}{0pt}\begin{tabular}{c}\vspace{-1.2mm}\hspace{-0.5mm}{$\bm{\vartheta}_2$}\end{tabular}}
  \psfrag{n3}[b][b][\mlab]{\color[rgb]{0,0,0}\setlength{\tabcolsep}{0pt}\begin{tabular}{c}\vspace{-1.2mm}\hspace{-0.5mm}{$\bm{\vartheta}_3 \rmv$}\end{tabular}}
  \psfrag{n4}[b][b][\mlab]{\color[rgb]{0,0,0}\setlength{\tabcolsep}{0pt}\begin{tabular}{c}\vspace{-1.2mm}\hspace{-0.5mm}{$\bm{\vartheta}_4 \rmv$}\end{tabular}}
  \psfrag{n5}[b][b][\mlab]{\color[rgb]{0,0,0}\setlength{\tabcolsep}{0pt}\begin{tabular}{c}\vspace{-1.2mm}\hspace{-0.5mm}{$\bm{\vartheta}_5 \rmv$}\end{tabular}}
  
  \psfrag{la}[b][b][\mlab]{\color[rgb]{0,0,0}\setlength{\tabcolsep}{0pt}\begin{tabular}{c}{(a)}\end{tabular}}
  \psfrag{lb}[b][b][\mlab]{\color[rgb]{0,0,0}\setlength{\tabcolsep}{0pt}\begin{tabular}{c}{(b)}\end{tabular}}
  \centering{}\includegraphics[width=1\columnwidth]{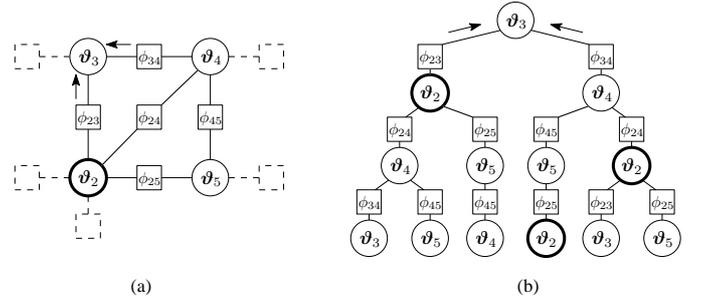}
  \caption{(a) FG and (b) its computation tree to vertex 3 after 3 iterations.}
  \label{fig:Comp_Tree}
\end{figure}

\highlight{
For the convergence proofs, we first introduce the concept of computation trees on the example of the FG from Fig.~\ref{fig:FG_network}. As the MN represented by vertex $\bm{\vartheta}_1$ is a fixed parameter, it can be considered as prior information to vertex $\bm{\vartheta}_2$. As function vertices that are connected only to a single variable vertex can be merged\footnote{While in general this merging is not unique, for our purpose it is sufficient to
divide the prior into equal parts for each connected likelihood function.} into pairwise function vertices, Fig.~\ref{fig:Comp_Tree} (a) is an equivalent representation where $\phi_{ij}$ comprises the prior and the local likelihood. The message passing in the cyclic graph is equivalent to the message passing in a computation tree, where the considered variable vertex represents the root of the tree. In Fig.~\ref{fig:Comp_Tree} (b), the computation tree to vertex $\bm{\vartheta}_3$ for three message passing iterations is depicted, i.e., the computation tree of depth 3. In the tree, only messages towards the root vertex are considered. For later use, also the variable vertex $\bm{\vartheta}_2$ is highlighted, as it is connected to the master node.
}

\highlight{
 \subsubsection{Proof of  Theorem \ref{pr:Conv_prior}} \label{sec:proof1}
The precision matrix of any pairwise factor between vertex $i$ and $j$ can be written as $\mathbf{J}_{ij} \triangleq \bm{\Sigma}_{ij}^{-1} = [\mathbf{A}_{ij}, \mathbf{B}_{ij}]^\mathrm{T} [\mathbf{A}_{ij}, \mathbf{B}_{ij}]$, which is positive semidefinite (p.s.d.) because of the full correlation of the second column in $\mathbf{A}_{ij}$ and $\mathbf{B}_{ij}$.
Without loss of generality, we can distribute the prior information to the pairwise components as
	    $\mathbf{J}_{ij}' = \mathbf{J}_{ij}
	    +
	    \bm{\Sigma}_{\text{p},i}^{-1} / |\mathcal{T}_i|$, 
      where $\bm{\Sigma}_{\text{p},i}^{-1} \succ 0$ denotes the precision of the prior information. As all $\mathbf{J}_{ij}' \succ 0$, the system is factor graph normalizable \cite[Prop. 4.3.3]{malioutov}, and thus the variances of BP are guaranteed to converge.
}

\highlight{
\subsubsection{Proof of Theorem \ref{pr:Conv_master}} \label{sec:proof2}
In this case we cannot guarantee positive definite (p.d.) precision matrices in the factor vertices. The following proof is similar to the proof of \cite[Prop. 4.3.3]{malioutov}.
In the computation tree, paths with a connection to a MN appear. We will show that these paths have a p.d. contribution to the marginal of the root vertex. 
Based on this result, we can show that the variances are bounded from below and monotonically decreasing. In the following, derivations the scaling with ${1}/{\sigma_w^2}$ is dropped.\\
  \emph{Step 1} (Path marginals)
    Consider a vertex $j$ connected to a MN $k$. The MN adds the p.d. main diagonal entry $\mathbf{A}_{jk}^\mathrm{T} \mathbf{A}_{jk}$ to the covariance of $j$ (see \eqref{eq:MAP_covar}), which can be equally distributed to the adjacent factors. Now consider a path $h \leftarrow i \leftarrow j$, where the leaf vertex $j$ has a connection to a MN $k$. The precision matrix
    is given by
	   \begin{align*}
	      \bm{\Sigma}_{hij}^{-1} &= \begin{bmatrix} 
		  \mathbf{A}_{hi}^\mathrm{T} \mathbf{A}_{hi} & \mathbf{A}_{hi}^\mathrm{T} \mathbf{B}_{hi} & \mathbf{0}\\
		  \mathbf{B}_{hi}^\mathrm{T} \mathbf{A}_{hi} & \mathbf{B}_{hi}^\mathrm{T} \mathbf{B}_{hi} + \mathbf{A}_{ij}^\mathrm{T} \mathbf{A}_{ij} & \mathbf{A}_{ij}^\mathrm{T} \mathbf{B}_{ij} \\
		  \mathbf{0} & \mathbf{B}_{ij}^\mathrm{T} \mathbf{A}_{ij} & \mathbf{B}_{ij}^\mathrm{T} \mathbf{B}_{ij}  + \mathbf{Q}_{jk}
		\end{bmatrix},
	  \end{align*}
	  with $\mathbf{Q}_{jk} = 1/|\mathcal{T}_i| \, \mathbf{A}_{jk}^\mathrm{T} \mathbf{A}_{jk} \succ 0$.
	  We marginalize out the leaf vertex $j$ and obtain the precision matrix of $h$ and $i$
      	  \begin{align*}
	      {\bm{\Sigma}}_{hi,j}^{-1} &= \begin{bmatrix} 
		  \mathbf{A}_{hi}^\mathrm{T} \mathbf{A}_{hi} & \mathbf{A}_{hi}^\mathrm{T} \mathbf{B}_{hi} \\
		  \mathbf{B}_{hi}^\mathrm{T} \mathbf{A}_{hi} & \mathbf{B}_{hi}^\mathrm{T} \mathbf{B}_{hi} + \mathbf{S}_{ij}
		\end{bmatrix},
	  \end{align*}      
      where $\mathbf{S}_{ij}$ is the Schur complement
      \begin{align*}
	    {\mathbf{S}}_{ij} = & \mathbf{A}_{ij}^\mathrm{T} \mathbf{A}_{ij} - \mathbf{A}_{ij}^\mathrm{T} \mathbf{B}_{ij}  
	    (\mathbf{B}_{ij}^\mathrm{T} \mathbf{B}_{ij} + \mathbf{Q}_{jk})^{-1}
	    \mathbf{B}_{ij}^\mathrm{T} \mathbf{A}_{ij} \nonumber \\
	    = &
	    \underbrace{\mathbf{A}_{ij}^\mathrm{T} \mathbf{A}_{ij} - \mathbf{A}_{ij}^\mathrm{T} \mathbf{B}_{ij} \mathbf{B}_{ij}^+ \mathbf{A}_{ij}}_
	    {\succeq 0}\\
	    &
	    + 
	    \underbrace{\mathbf{A}_{ij}^\mathrm{T} (\mathbf{B}_{ij}^+)^\mathrm{T} (\mathbf{Q}_{jk}^{-1} + (\mathbf{B}_{ij}^\mathrm{T} \mathbf{B}_{ij})^{-1})^{-1} \mathbf{B}_{ij}^+ \mathbf{A}_{ij}}
	    _{\succ 0}.
       \end{align*}
      The expansion from the first to the second line uses the Woodbury identity. In the same way the positive definiteness propagates until the root. Thus, paths including MNs always have a p.d. marginal precision, whereas paths not including any MN connection have a p.s.d. marginal precision.  \\
  \emph{Step 2} (Bounded from below) As we require at least one MN in the tree, at least one path will have a p.d. share to the marginal precision matrix of the root vertex. Thus, the marginal of the root vertex has upper bounded precision, and lower bounded covariance.
    \\
  \emph{Step 3} (Monotonic decreasing) Every time when the depth of the computation tree increases from $n$ to $n+1$, and another vertex connected to a MN is added as leaf, an additional p.d.~share is added to the corresponding path. Thus, the precision of the root marginal increases in the p.d.~sense, and its covariance (the inverse) in the negative definite sense. In summary, variances decrease monotonically as a the depth of the computation tree increases, and as they are bounded from below, hence they converge.
}
\bibliographystyle{ieeetr_noParentheses}
\bibliography{references}

\end{document}